\newtheorem{lemma}{Lemma}
\newtheorem{thm}{Theorem}
\newtheorem{cor}{Corollary}
\theoremstyle{remark}
\newtheorem{remark}{Remark}
\newcommand{\beq}{\begin{equation}}
\newcommand{\eeq}{\end{equation}}
\newcommand{\rd}{\partial}
\newcommand{\tp}[1]{\,{\vphantom{#1}}^\mathrm{t}\!\,#1}
\newcommand{\CC}{\mathbb{C}}
\newcommand{\PP}{\mathbb{P}}
\newcommand{\RR}{\mathbb{R}}
\newcommand{\ZZ}{\mathbb{Z}}
\newcommand{\calL}{\mathcal{L}}
\newcommand{\calM}{\mathcal{M}}
\newcommand{\calP}{\mathcal{P}}
\newcommand{\calT}{\mathcal{T}}
\newcommand{\calW}{\mathcal{W}}
\newcommand{\rmS}{\mathrm{S}}
\newcommand{\bsp}{\boldsymbol{p}}
\newcommand{\bst}{\boldsymbol{t}}
\newcommand{\bsx}{\boldsymbol{x}}
\newcommand{\bszero}{\boldsymbol{0}}
\begin{document}

\title{Cubic Hodge integrals and integrable hierarchies of Volterra type}
\author{Kanehisa Takasaki\thanks{E-mail: takasaki@math.kindai.ac.jp}\\
{\normalsize Department of Mathematics, Kindai University}\\ 
{\normalsize 3-4-1 Kowakae, Higashi-Osaka, Osaka 577-8502, Japan}}
\date{}
\maketitle

\begin{abstract}
A tau function of the 2D Toda hierarchy can be obtained from 
a generating function of the two-partition cubic Hodge integrals. 
The associated Lax operators turn out to satisfy an algebraic relation. 
This algebraic relation can be used to identify a reduced system 
of the 2D Toda hierarchy that emerges when the parameter $\tau$ 
of the cubic Hodge integrals takes a special value. 
Integrable hierarchies of the Volterra type are shown to be 
such reduced systems. They can be derived for positive rational 
values of $\tau$. In particular, the discrete series $\tau = 1,2,\ldots$ 
correspond to the Volterra lattice and its hungry generalizations. 
This provides a new explanation to the integrable structures 
of the cubic Hodge integrals observed by Dubrovin et al. 
in the perspectives of tau-symmetric integrable Hamiltonian PDEs. 
\end{abstract}


\section{Introduction}

The Hodge integrals 
\[
  \int_{\overline{\calM}_{g,n}}\lambda_1^{j_1}\cdots\lambda_g^{j_g}
    \psi_1^{k_1}\cdots\psi_n^{k_n},\quad 
  j_1,\ldots,j_g,k_1,\ldots,k_n \geq 0, 
\]
are intersection numbers of two kinds of tautological 
cohomology classes $\psi_1,\ldots,\psi_n$ (the $\psi$-classes) 
and $\lambda_1,\ldots,\lambda_g$ (the Hodge classes) 
on the Deligne-Mumford moduli space $\overline{\calM}_{g,n}$ 
of connected complex stable curves.  
Special linear combinations of these integrals of the form 
\beq
  \int_{\overline{\calM}_{g,n}}
  \frac{\prod_{i=1}^d\Lambda_g^\vee(a_i)}
       {\prod_{i=1}^n(1 - z_i\psi_i)},\quad 
  \Lambda^\vee_g(u) = u^g - u^{g-1}\lambda_1 + \cdots + (-1)^g\lambda_g, 
  \label{specialHI}
\eeq
appear in computation of Gromov-Witten invariants 
of a $d$-dimensional manifold by the method 
of localization \cite{GP99}. 

A combinatorial expression of the cubic (i.e., $d = 3$) 
special Hodge integrals (\ref{specialHI}) was proposed 
by the Gokakumar-Mari\~{n}o-Vafa conjecture \cite{GV99,MV01}. 
The Hodge integrals in this formula are organized 
into an all-genus generating function $G_\mu(a_1,a_2,a_3)$ 
that depends on an integer partition $\mu$, hence called 
the one-partition Hodge integrals.  The parameters $a_i$  
are required to satisfy the so called 
Calabi-Yau condition $a_1 + a_2 + a_3 = 0$. 
Moreover, since the generating function enjoys 
scale invariance, $a_i$'s can be effectively parametrized 
by a single parameter $\tau \not= 0,1$ as 
\[
  a_1 = 1, \quad a_2 = \tau,\quad a_3 = - \tau - 1. 
\]
Let $G_\mu(\tau)$ denote $G_\mu(a_1,a_2,a_3)$ 
in this parametrization.  Liu, Liu and Zhou proved 
the Gopakumar-Mari\~{n}o-Vara conjecture \cite{LLZ0306} 
\footnote{Okounkov and Pandharipande 
presented another proof \cite{OP0307}.}
and extended it to a two-partition version 
$G_{\mu\bar{\mu}}(\tau)$ \cite{LLZ0310272}.  
Moreover, Zhou pointed out that the KP and 2D Toda 
hierarchies underlie these combinatorial expressions 
of the cubic Hodge integrals \cite{Zhou0310408}. 

Some ten years after the work of Liu et al.,  
Dubrovin, Zhang and their collaborators addressed 
the cubic Hodge integrals in the perspectives 
of integrable Hamiltonian PDEs 
and quasi-Miura transformations \cite{DLYZ1409}.  
They observed that integrable hierarchies 
of the Volterra type, typically the Volterra lattice 
\cite{KvM75} (referred to as the discrete KdV hierarchy there), 
emerge in the cubic Hodge integrals for particular values 
of the parameters \cite{DY1606,DLYZ1612,LZZ17}. 

In this paper, we return to Zhou's work \cite{Zhou0310408} 
and elucidate an origin of the integrable structures 
of the Volterra type observed by Dubrovin et al.  
The two-partition Hodge integrals are organized 
to a generating function of two copies of 
the power sum variables $p_i,\bar{p}_i$, $i = 1,2,\ldots$. 
Zhou used a fermionic expression of the generating function 
to show that the generating function 
(slightly modified to depend on the lattice coordinate $s$
\footnote{The meaning of $s$ in the context of Hodge integrals 
is obscure.}) 
becomes a tau function of the 2D Toda hierarchy.  
We consider the associated two Lax operators $L,\bar{L}$ 
of the 2D Toda hierarchy.  These Lax operators turn out 
to satisfy an algebraic relation (Theorem \ref{key-thm}).  
This implies the existence of an auxiliary Lax operator $\calL$ 
(Corollary \ref{key-cor}).  

The operator $\calL$ plays a central role 
to identify an underlying integrable structure. 
When $\tau$ is a positive integer, $\calL$ coincides 
with the Lax operator of a generalized Volterra lattice called 
the Bogoyavlensky-Itoh-Narita (aka the hungry Lotka-Volterra) 
system \cite{Narita82,Bog87,Itoh87}.  The case of $\tau = 1$ 
corresponds to the the Volterra lattice as proved 
by Dubrovin et al. \cite{DLYZ1612} by a different method. 
When $\tau$ is a positive rational number, 
$\calL$ becomes the Lax operator of a generalization 
of the Bogoyavlensky-Itoh-Narita system.  
This amounts to the rational Volterra hierarchy 
introduced by Liu et al. \cite{LZZ17}.  
We can thus explain the origin of integrable hierarchies 
of the Volterra type in a unified way.  
On the other hand, when $\tau$ is a negative rational number, 
we encounter a different kind of integrable structures, 
namely, a lattice version of the Gelfand-Dickey hierarchy 
\cite{Frenkel95} and further reductions thereof. 

We thus find the following integrable structures 
as reductions of the master 2D Toda hierarchy 
(or one of its lattice KP sectors). 
\begin{itemize}
\item 
$\tau = N$, $N = 1,2,\ldots$: 
The relevant integrable structure is the $N+1$-step 
Bogoyavlensky-Itoh-Narita system. The first member 
of this series is the Volterra lattice. 
\item 
$\tau = b/a$, $a$ and $b$ are positive coprime integers: 
The relevant integrable structure is a generalization 
of the Bogoyavlensky-Itoh-Narita system. 
\item 
$\tau = - b/a$, $a$ and $b$ are positive coprime integers: 
A lattice version of the Gelfand-Dickey hierarchy emerges 
in a further reduced form.  A particularly interesting subset 
is the discrete series $\tau = -N/(N+1)$, $N = 1,2,\ldots$.  
\end{itemize}

To prove the algebraic relation of the Lax operators, 
we use a method developed in our previous work 
on the melting crystal model 
and a family of topological string theory 
\cite{Takasaki13a,Takasaki13b,Takasaki14}. 
This method is based on a factorization problem 
that characterizes the dressing operators  
behind the Lax operators. We use it to compute 
{\it the initial values} of the Lax operators.  
As it turns out, these initial values of the Lax operators 
satisfy the algebraic relation in question. 
We can then conclude, by a simple reasoning, 
that the algebraic relation is also satisfied 
throughout the time evolutions.  

This paper is organized as follows.  
Section 2 reviews the combinatorial expression 
of the two-partition Hodge integrals.  
Relevant geometric and combinatorial notions 
are introduced here. 
Section 3 recalls the construction of the tau function 
from the two-partition Hodge integrals.  
A 2D complex free fermion system is a fundamental tool 
of this section.  
Section 4 presents the key theorem and its proof. 
Fermionic expressions of building blocks of the tau function 
are translated to the language of difference operators.  
Those difference operators play a central role 
in this and the next sections.  
Section 5 is devoted to various integrable structures 
that emerge when $\tau$ takes special values.

\section{Two-partition Hodge integrals}

\subsection{Geometric definition} 

Let $G_{g\mu\bar{\mu}}(\tau)$ denote the following 
cubic Hodge integral \cite{LLZ0310272} that depend 
on a parameter $\tau \not= 0$ and two partitions
\footnote{All partitions in this paper are understood 
to have no restriction on the length.  
Such a partition $\mu$ is represented 
by a non-increasing sequence $(\mu_i)_{i=1}^\infty$, 
$\mu_1 \geq \mu_2 \geq \cdots$, of non-negative integers 
such that $\mu_i = 0$ for all $i$'s greater than a bound $n$. 
The minimum of the bound $n$ is the length of $\mu$, 
and denoted by $l(\mu)$.}
$\mu = (\mu)_{i=1}^\infty$, 
$\bar{\mu} = (\bar{\mu}_i)_{i=1}^\infty$, 
$(\mu,\bar{\mu}) \not= (\emptyset,\emptyset)$: 
\beq
  G_{g\mu\bar{\mu}}(\tau) 
  = c_{\mu\bar{\mu}}(\tau) 
   \int_{\overline{\calM}_{g,l(\mu)+l(\bar{\mu})}}
   \frac{\Lambda^\vee_g(1)\Lambda^\vee_g(\tau)\Lambda^\vee_g(-\tau-1)}
    {\prod_{i=1}^{l(\mu)}\frac{1}{\mu_i}(\frac{1}{\mu_i} - \psi_i)
     \prod_{i=1}^{l(\bar{\mu})}\frac{\tau}{\bar{\mu}_i}
     (\frac{\tau}{\bar{\mu}_i} - \psi_{l(\mu)+i}) }. 
  \label{Ggmmbar(tau)}
\eeq
$c_{\mu\bar{\mu}}(\tau)$ is a combinatorial factor of the form 
\[
\begin{aligned}
  c_{\mu\bar{\mu}}(\tau)
  &= - \frac{\sqrt{-1}^{l(\mu)+l(\bar{\mu})}}{z(\mu)z(\bar{\mu})}
       (\tau(\tau + 1))^{l(\mu)+l(\bar{\mu})-1}\\
  &\quad\mbox{}\times 
  \prod_{i=1}^{l(\mu)}\frac{(\mu_i\tau)_{\mu_i-1}}{\mu_i!}
  \prod_{i=1}^{l(\bar{\mu})}
    \frac{(\bar{\mu}_i\tau^{-1})_{\bar{\mu}_i-1}}{\bar{\mu}_i!},  
\end{aligned}
\]
where $(a)_k$ is the Pochhammer symbol, $(a)_k = a(a+1)\cdots (a+k-1)$, 
and $z(\mu)$ and $z(\bar{\mu})$ are defined as 
\[
  z(\mu) = \prod_{i=1}^\infty i^{m_i}m_i!,\quad 
  z(\bar{\mu}) = \prod_{i=1}^\infty i^{\bar{m}_i}\bar{m}_i!
\]
in terms of the cycle type, $\mu = (1^{m_1}2^{m_2}\cdots)$ 
and $\bar{\mu} = (1^{\bar{m}_1}2^{\bar{m}_2}\cdots)$, 
of $\mu$ and $\bar{\mu}$. 

As briefly explained in Introduction, $\overline{\calM}_{g,n}$ 
is the Deligne-Mumford compactification of the moduli space 
$\calM_{g,n}$ of connected smooth complex algebraic curve $C$ 
of genus $g$ with $n$ marked points $x_1,\ldots,x_n$. 
The $i$-th $\psi$-class $\psi_i$ is the first Chern class 
$c_1(L_i)$ of a line bundle $L_i$. The fiber of $L_i$ 
at $(C,x_1,\ldots,x_n) \in \calM_{g,n}$ is the cotangent space 
$T^*_{x_i}C$ of $C$ at $x_i$.   
$\Lambda_g^\vee(u)$ is the same special linear combination 
of $1$ and the Hodge classes $\lambda_1,\dots,\lambda_g$ 
as used in (\ref{specialHI}). 
The $k$-th Hodge class $\lambda_k$ is the $k$-th 
Chern class $c_k(E_g)$ of the Hodge bundle $E_g$.  
In this sense, $\Lambda_g^\vee(u)$ is the characteristic polynomial 
in the definition of the Chern classes 
in terms of the curvature form of $E_g$. 
The fiber of $E_g$ at $(C,x_1,\ldots,x_n) \in \calM_{g,n}$ 
is the $g$-dimensional linear space 
of holomorphic $1$-forms on $C$.

\subsection{Generating functions}

Several generating functions are constructed 
from these cubic Hodge integrals.  
Firstly, the all-genus generating function $G_{\mu\bar{\mu}}(\tau)$ 
is the power series 
\beq
  G_{\mu\bar{\mu}}(\tau) 
  = \sum_{g=0}^\infty \hbar^{2g-2+l(\mu)+l(\bar{\mu})}G_{g\mu\bar{\mu}}(\tau) 
\eeq
of a new variable $\hbar$.  
Secondly, two sets of variables $\bsp = (p_k)_{k=1}^\infty$, 
$\bar{\bsp} = (\bar{p}_k)_{k=1}^\infty$ are introduced 
to make a generating function with respect 
to the partitions $\mu,\bar{\mu}$ as 
\beq
  G(\tau,\bsp,\bar{\bsp})
  = \sum_{\mu,\bar{\mu}\in\calP,(\mu,\bar{\mu})\not=(\emptyset,\emptyset)}
    G_{\mu,\bar{\mu}}(\tau)p_\mu\bar{p}_{\bar{\mu}},
\eeq
where $\calP$ denote the set of all partitions, 
and $p_\mu$ and $\bar{p}_{\bar{\mu}}$ are the monomials 
\[
  p_\mu = \prod_{i=1}^{l(\mu)} p_{\mu_i},\quad 
  \bar{p}_{\bar{\mu}} = \prod_{i=1}^{l(\bar{\mu})}\bar{p}_{\bar{\mu}_i}. 
\]
Lastly, this generating function is exponentiated to 
\beq
  G^\bullet(\tau,\bsp,\bar{\bsp}) = \exp G(\tau,\bsp,\bar{\bsp}). 
\eeq

\subsection{Schur functions}

The $\bsp$-variables originate in the so called power sums 
that relate the Schur functions $s_\mu(\bsx)$ 
in the sense of Macdonald's book \cite{Mac-book} 
to a set of polynomials $S_\mu(\bsp)$.  
$S_\mu(\bsp)$'s can be directly defined 
by the determinant formula
\[
\begin{gathered}
  S_\mu(\bsp) = \det(S_{\mu_i-i+j}(\bsp))_{i,j=1}^\infty,\\
  \sum_{m=0}^\infty S_m(\bsp)z^m 
    = \exp\left(\sum_{k=1}^\infty \frac{p_k}{k}z^k\right).
\end{gathered}
\]
The right hand side of this formula is understood 
to be an $n \times n$ determinant 
$\det(S_{\mu_i-i+j}(\bsp))_{i,j=1}^n$, $n \geq l(\mu)$, 
which is independent of $n$.  
$s_\mu(\bsx)$ is obtained from $S_\mu(\bsp)$ by substituting 
the power sums 
\[
  p_k = \sum_{i\geq 1} x_i^k,\quad k = 1,2,\ldots 
\]
of the components $x_i$ of $\bsx$. 
In the following, we need an infinite-variate version 
of the Schur functions $s_\mu(\bsx)$, $\bsx = (x_i)_{i=1}^\infty$. 
The approach to $s_\mu(\bsx)$ from $S_\mu(\bsp)$ 
is particularly convenient for that purpose.  

The skew Schur functions $s_{\mu/\nu}(\bsx)$ and $S_{\mu/\nu}(\bsp)$ 
can be treated in the same way.  The determinant formula 
of $S_\mu(\bsp)$ can be generalized to $S_{\mu/\nu}(\bsp)$ as 
\[
  S_{\mu/\nu}(\bsp) = (S_{\mu_i-\nu_j-i+j}(\bsp))_{i,j=1}^\infty. 
\]

\subsection{Combinatorial expression}

Liu, Liu and Zhou \cite{LLZ0310272} proved the following 
combinatorial expression of the generating function 
$G^\bullet(\tau,\bsp,\bar{\bsp})$. 

\begin{thm}
\beq
  G^\bullet(\tau,\bsp,\bar{\bsp}) = R^\bullet(\tau,\bsp,\bar{\bsp}),
\eeq
where $R^\bullet(\tau,\bsp,\bar{\bsp})$ is defined as 
\beq
  R^\bullet(\tau,\bsp,\bar{\bsp}) 
  = \sum_{\nu,\bar{\nu}\in\calP}
     q^{(\kappa(\nu)\tau+\kappa(\bar{\nu})\tau^{-1}))/2}
     \calW_{\nu\bar{\nu}}(q)S_\nu(\bsp)S_{\bar{\nu}}(\bar{\bsp}) 
  \label{Rb-def}
\eeq
with the combinatorial building blocks 
\[
\begin{gathered}
  \calW_{\nu\bar{\nu}}(q) = s_\nu(q^\rho)s_{\bar{\nu}}(q^{\nu+\rho}),\quad
  q = e^{\sqrt{-1}\hbar},\\
  q^\rho = (q^{-i+1/2})_{i=1}^\infty,\quad
  q^{\nu+\rho} = (q^{\nu_i-i+1/2})_{i=1}^\infty,\\
  \kappa(\nu) = \sum_{i=1}^\infty\nu_i(\nu_i-2i+1),\;
  \kappa(\bar{\nu}) = \sum_{i=1}^\infty\bar{\nu}_i(\bar{\nu}_i-2i+1). 
\end{gathered}
\]
\end{thm}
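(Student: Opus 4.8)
\medskip
\noindent\textbf{Proof strategy.}
The plan is to run the cut-and-join argument and thereby reduce the two-partition identity to the one-partition Gopakumar--Mari\~{n}o--Vafa formula, which is already available \cite{LLZ0306,OP0307}. Concretely, I would show that both $G^\bullet$ and $R^\bullet$ solve one and the same first-order linear evolution equation with $\tau$ as the time variable, and that they carry the same initial data; since such an equation has unique solutions on the graded space of formal power series in $\hbar,\bsp,\bar{\bsp}$, this would force $G^\bullet = R^\bullet$.

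The combinatorial side is the soft part. Writing
\beq
  \mathrm{CJ}_{\bsp} = \frac12\sum_{i,j\geq1}
    \left((i+j)\,p_ip_j\frac{\rd}{\rd p_{i+j}}
      + ij\,p_{i+j}\frac{\rd^2}{\rd p_i\,\rd p_j}\right)
\eeq
for the cut-and-join operator and $\mathrm{CJ}_{\bar{\bsp}}$ for its analogue in the $\bar{\bsp}$'s, one has $\mathrm{CJ}_{\bsp}S_\nu(\bsp) = \tfrac12\kappa(\nu)S_\nu(\bsp)$ and $\mathrm{CJ}_{\bar{\bsp}}S_{\bar{\nu}}(\bar{\bsp}) = \tfrac12\kappa(\bar{\nu})S_{\bar{\nu}}(\bar{\bsp})$, because the Schur polynomials diagonalize the cut-and-join operator. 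As $\calW_{\nu\bar{\nu}}(q)$ does not involve $\tau$, differentiating (\ref{Rb-def}) termwise and using $\log q = \sqrt{-1}\hbar$ gives immediately
\beq
  \frac{\rd R^\bullet}{\rd\tau}
  = \sqrt{-1}\hbar\bigl(\mathrm{CJ}_{\bsp} - \tau^{-2}\mathrm{CJ}_{\bar{\bsp}}\bigr)R^\bullet.
  \label{CJ-Rb}
\eeq
No Hodge geometry enters this computation.

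The geometric side is the hard part, and I expect the derivation of (\ref{CJ-Rb}) for $G^\bullet$ to be the main obstacle. One reads the integrals $G_{g\mu\bar{\mu}}(\tau)$ as disconnected, weighted relative Gromov--Witten-type invariants and obtains (\ref{CJ-Rb}) by virtual localization on the pertinent moduli space: an infinitesimal shift of the framing $\tau$ acts on the ramification profiles over the two relative divisors by the elementary cut and join operations, whose generating effect is exactly $\mathrm{CJ}_{\bsp}$ and, with the weight $\tau^{-2}$ reflecting the $\tau$-rescaling of the $\bar{\mu}$-markings in (\ref{Ggmmbar(tau)}), $\mathrm{CJ}_{\bar{\bsp}}$. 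The technical inputs are the functoriality of the virtual localization formula and the pushforward rules for $\psi$- and $\lambda$-classes along forgetful and gluing morphisms; this is essentially the analysis of Liu--Liu--Zhou \cite{LLZ0310272}, and an alternative route through completed cycles and the infinite-wedge formalism is that of Okounkov--Pandharipande \cite{OP0307}.

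It remains to match initial data. Since (\ref{CJ-Rb}) preserves the $\bsp$- and $\bar{\bsp}$-degrees, it suffices to identify the two sides, at one value of $\tau$, on each bigraded component. I would obtain this from: (i) the restriction $\bar{\bsp} = \bszero$, where $S_{\bar{\nu}}(\bszero) = \delta_{\bar{\nu}\emptyset}$ and $\calW_{\nu\emptyset}(q) = s_\nu(q^\rho)$ collapse (\ref{Rb-def}) to the one-partition series $\sum_\nu q^{\kappa(\nu)\tau/2}s_\nu(q^\rho)S_\nu(\bsp)$ while $G^\bullet(\tau,\bsp,\bszero)$ becomes the exponentiated one-partition generating function, so that the two agree by the known one-partition formula; (ii) the symmetry $G^\bullet(\tau,\bsp,\bar{\bsp}) = G^\bullet(\tau^{-1},\bar{\bsp},\bsp)$, immediate from (\ref{Ggmmbar(tau)}) after exchanging the two blocks of marked points and reflected on the combinatorial side by the corresponding Schur-function identity, which disposes of the $\bsp = \bszero$ locus as well; and (iii) a degeneration (gluing) formula --- again a localization statement --- expressing the genuinely two-partition integrals through one-partition data, which carries the agreement to the components with $l(\mu),l(\bar{\mu})>0$. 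Together with (\ref{CJ-Rb}), this shared initial data would finish the proof by uniqueness.
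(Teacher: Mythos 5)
The first thing to note is that the paper does not prove this statement: it is imported verbatim as the two-partition Gopakumar--Mari\~{n}o--Vafa formula of Liu--Liu--Zhou \cite{LLZ0310272}, so there is no internal proof to compare yours against. Measured against the literature proof instead, your outline is structurally faithful: the argument of \cite{LLZ0310272} is precisely a cut-and-join uniqueness argument, with both generating functions shown to satisfy the same first-order evolution equation in $\tau$ and then matched at one value of $\tau$. Your computation on the combinatorial side is correct ($\mathrm{CJ}_{\bsp}$ has eigenvalue $\kappa(\nu)/2$ on $S_\nu(\bsp)$, and differentiating $q^{(\kappa(\nu)\tau+\kappa(\bar{\nu})\tau^{-1})/2}$ produces exactly the $\tau^{-2}$ weight on $\mathrm{CJ}_{\bar{\bsp}}$), and the symmetry $G_{\mu\bar{\mu}}(\tau)=G_{\bar{\mu}\mu}(\tau^{-1})$ invoked in your item (ii) is a genuine consequence of the scale invariance of $G_{g\mu\bar{\mu}}(a_1,a_2,a_3)$ under rescaling and permutation of the weights.

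As a proof, however, the proposal is not self-contained, and the gaps sit exactly where the theorem is hard. The cut-and-join equation for $G^\bullet$ is not a routine localization exercise; it occupies the bulk of \cite{LLZ0310272} (relative stable maps to $\PP^1$ with two relative divisors, the branch morphism, and careful bookkeeping of the $\tau$-dependence), and you defer it wholesale. More seriously, your step (iii) for the initial data --- a degeneration formula expressing the genuinely two-partition integrals through one-partition data --- is essentially the statement being proved: if such a formula were available at a fixed $\tau$ you would not need the evolution equation at all, and items (i)--(ii) alone only pin down the bigraded components of pure $\bsp$- or pure $\bar{\bsp}$-degree, leaving every mixed component unconstrained. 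The published proof instead obtains the initial data by evaluating both sides at a limiting value of the framing parameter, which requires nontrivial asymptotics of the Hodge integrals (via the one-partition formula) and matching combinatorial identities for $\calW_{\nu\bar{\nu}}(q)$. In short: right strategy, correct soft computations, but the two decisive inputs are cited rather than supplied.
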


The main building block of $R^\bullet(\tau,\bsp,\bar{\bsp})$ 
is the two-leg topological vertex $\calW_{\nu\bar{\nu}}(q)$ \cite{AKMV03}. 
This is a rational function of $q^{1/2}$ that enjoys 
the non-trivial symmetries \cite{Zhou0310282} 
\beq
  \calW_{\nu\bar{\nu}}(q) 
  = \calW_{\bar{\nu}\nu}(q) 
  = (-1)^{|\nu|+|\bar{\nu}|}\calW_{\tp{\nu}\tp{\bar{\nu}}}(q^{-1}), 
  \label{W-sym}
\eeq
where $\tp{\nu}$ and $\tp{\bar{\nu}}$ stand for 
the conjugate partitions of $\nu$ and $\bar{\nu}$ 
that amount to the transpose of the associated Young diagrams. 
The symmetry under the inversion $q\mapsto q^{-1}$ 
is a consequence of the relation 
(as rational functions of $q^{1/2}$) 
\beq
  p_k(q^{\nu+\rho}) = - p_k(q^{-\tp{\nu}-\rho}) 
\eeq
between the values of the power sums 
$p_k(\bsx) = \sum_{i=1}^\infty x_i^k$ at two special points 
(cf. the proof of Lemma 1 in our previous work \cite{TN15}) 
and the identity 
\beq
  S_\mu(- \bsp) = (-1)^{|\mu|}S_{\tp{\mu}}(\bsp) 
\eeq
of the Schur functions.  

Let us mention that the last identity of the Schur functions 
can be generalized to the skew Schur functions as 
\beq
  S_{\mu/\nu}(- \bsp) = (-1)^{|\mu|+|\nu|}S_{\tp{\mu}/\tp{\nu}}(\bsp). 
\eeq
Accordingly, the inversion symmetry (\ref{W-sym}) of $\calW(q)$ 
can be extended to the three-leg topological vertex 
$C_{\lambda\mu\nu}(q)$ \cite{AKMV03} as 
\beq
  C_{\lambda\mu\nu}(q) 
  = (-1)^{|\lambda|+|\mu|+|\nu|}C_{\tp{\lambda}\tp{\mu}\tp{\nu}}(q^{-1}). 
\eeq

\section{Lift to tau function}

\subsection{Fermionic language}

The goal of this section is to convert (or, rather, lift) 
$R^\bullet(\tau,\bsp,\bar{\bsp})$ to a tau function 
of the 2D Toda hierarchy.  To this end, 
we use the language of complex fermions.  
The following (partially somewhat unusual) formulation 
of fermionic operators and Fock spaces is the same 
as our previous work \cite{Takasaki13a,Takasaki13b,Takasaki14,TN15}. 

Let $\psi_n$ and $\psi^*_n$, $n \in \ZZ$, denote the Fourier modes
\footnote{Do not confuse them with the $\psi$-classes 
on $\bar{\calM}_{g,n}$.  Moreover, as opposed to the usual formulation, 
we label these operators with integers rather than half integers.}
of the 2D free fermion fields 
\[
  \psi(z) = \sum_{n\in\ZZ}\psi_nz^{-n},\quad 
  \psi^*(z) = \sum_{n\in\ZZ}\psi^*_nz^{-n-1} 
\]
that satisfy the anti-commutation relations 
\[
  \psi_m\psi^*_n + \psi^*_n\psi_m = \delta_{m+n,0}, \quad
  \psi_m\psi_n + \psi_n\psi_m = 0, \quad 
   \psi^*_m\psi^*_n + \psi^*_n\psi^*_m = 0. 
\]
The (bra- and ket-) Fock spaces are decomposed 
to the charge-$s$ sectors for $s \in \ZZ$.  
The charge-$s$ sectors are spanned by the vectors 
\[
\begin{aligned}
  \langle \mu,s| &= \langle -\infty|\cdots\psi^*_{\mu_i-i+1+s}
    \cdots\psi^*_{\mu_2-1+s}\psi^*_{\mu_1+s},\\
  |\mu,s\rangle &= \psi_{-\mu_1-s}\psi_{-\mu_2+1-s}\cdots
    \psi_{-\mu_i+i-1-s}\cdots|-\infty\rangle 
\end{aligned}
\]
labelled by partitions $\mu\in\calP$. Their pairing 
is defined as 
\[
  \langle\mu,r|\nu,s\rangle = \delta_{rs}\delta_{\mu\nu}. 
\]
The vectors $\langle\emptyset,s|$ and $|\emptyset,s\rangle$
are called the ground states of the charge-$s$ sector, 
and abbreviated as $\langle s|$ and $|s\rangle$. 
In particular, $\langle 0|$ and $|0\rangle$ represent 
the vacuum states of the whole fermion system. 
Moreover, let $\langle\mu|$ and $|\mu\rangle$ 
denote the vectors $\langle\mu,0|$ and $|\mu,0\rangle$ 
in the charge-$0$ sector.  

Let $L_0,K$ and $J_m$, $m \in \ZZ$, denote the special 
fermion bilinears 
\[
\begin{gathered}
    L_0 = \sum_{n\in\ZZ}n{:}\psi_{-n}\psi^*_n{:},\quad
    K = \sum_{n\in\ZZ}(n-1/2)^2{:}\psi_{-n}\psi^*_n{:},\\
    J_m = \sum_{n\in\ZZ}{:}\psi_{-n}\psi^*_{n+m}{:}, \quad 
    s \in \ZZ, 
\end{gathered}
\]
where ${:}\psi_{-m}\psi^*_n{:}$'s are the normal ordered product: 
\[
\begin{gathered}
  {:}\psi_{-m}\psi^*_n{:} 
  = \psi_{-m}\psi^*_n - \langle 0|\psi_{-m}\psi^*_n|0\rangle,\\
  \langle 0|\psi_{-m}\psi^*_n|0\rangle 
  = \begin{cases}
    1 & \text{if $m = n \leq 0$},\\
    0 & \text{otherwise}. 
    \end{cases}
\end{gathered}
\]
$J_m$'s are used to construct the vertex operators \cite{ORV03,BY08} 
\[
\begin{gathered}
  \Gamma_{\pm}(z) 
  = \exp\left(\sum_{k=1}^\infty\frac{z^k}{k}J_{\pm k}\right),\\
  \Gamma'_{\pm}(z) 
  = \exp\left(- \sum_{k=1}^\infty\frac{(-z)^k}{k}J_{\pm k}\right) 
\end{gathered}
\]
and the multi-variate extensions 
\[
  \Gamma_{\pm}(\bsx) = \prod_{i=1}^\infty\Gamma_{\pm}(x_i),\quad 
  \Gamma'_{\pm}(\bsx) = \prod_{i\ge 1}\Gamma'_{\pm}(x_i). 
\]

The action of these operators preserves the charge. 
The matrix elements in the charge-$s$ sector 
take the following form: 
\begin{gather}
  \langle\mu,s|L_0|\nu,s\rangle = \delta_{\mu\nu}(|\mu| + s(s+1)/2),
  \label{<L0>}\\
  \langle\mu,s|K|\nu,s\rangle  
    = \delta_{\mu\nu}(\kappa(\mu) + 2s|\mu| + (4s^3 - s)/12),
  \label{<K>}\\
  \langle\mu,s|\Gamma_{-}(\bsx)|\nu,s\rangle
  = \langle\nu,s|\Gamma_{+}(\bsx)|\mu,s\rangle
  = s_{\mu/\nu}(\bsx),
  \label{<Gamma>}\\
  \langle\lambda,s|\Gamma'_{-}(\bsx)|\mu,s\rangle 
  = \langle\mu,s|\Gamma'_{+}(\bsx)|\lambda,s\rangle
  = s_{\tp{\lambda}/\tp{\mu}}(\bsx). 
  \label{<Gamma'>}
\end{gather}

The matrix elements of the vertex operators are thus 
independent of $s$, and yield the fermionic expression 
\beq
  s_{\mu/\nu}(\bsx) = \langle\mu|\Gamma_{-}(\bsx)|\nu\rangle
  = \langle\nu|\Gamma_{+}(\bsx)|\mu\rangle
  \label{S(x)=<>}
\eeq
of the skew Schur functions of the $\bsx$-variables. 
(Recall that $\langle\mu|$, etc. are abbreviations 
of $\langle\mu,0|$, etc.)  This expression can be derived 
from the fermionic expression \cite{MJD-book} 
\beq
  S_{\mu/\nu}(\bsp) 
  = \langle\mu|\exp\left(\sum_{k=1}^\infty\frac{p_k}{k}J_{-k}\right)|\nu\rangle 
  = \langle\nu|\exp\left(\sum_{k=1}^\infty\frac{p_k}{k}J_k\right)|\mu\rangle 
  \label{S(p)=<>}
\eeq
of the skew Schur functions of the $\bsp$-variables.

\subsection{Fermionic expression of $\calW_{\nu\bar{\nu}}(q)$}

The two-leg topological vertex has yet another expression 
\cite{AKMV03,Zhou0310282}: 
\beq
  \calW_{\nu\bar{\nu}}(q) 
  = q^{(\kappa(\nu)+\kappa(\bar{\nu}))/2}
    \sum_{\eta\in\calP}s_{\tp{\nu}/\eta}(q^\rho)s_{\tp{\bar{\nu}}/\eta}(q^\rho). 
  \label{W(q)3rd}
\eeq
The right hand side of expression is actually a finite sum 
over all $\eta$'s with $\eta\subseteq\tp{\nu}$ 
and $\eta\subseteq\tp{\bar{\nu}}$, where $\subseteq$ stands 
for the inclusion relation of the associated Young diagrams.  
We can translate this expression to the language of fermions 
as follows: 

\begin{lemma}
\beq
  \calW_{\nu\bar{\nu}}(q) 
  = \langle\tp{\nu}|q^{-K/2}\Gamma_{-}(q^\rho)\Gamma_{+}(q^\rho)q^{-K/2}
    |\tp{\bar{\nu}}\rangle.
  \label{W(q)=<>1st}
\eeq
Here (and in the rest of this paper) the $q$-exponential $q^A$ 
of an operator $A$ stands for $\exp(A\log q)$.  
\end{lemma}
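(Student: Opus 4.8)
The plan is to deduce the lemma directly from the combinatorial expression (\ref{W(q)3rd}) by unwinding the right-hand side of (\ref{W(q)=<>1st}) into matrix elements that were already computed in Section~3.1. First I would exploit the diagonal action of $K$ on the charge-$0$ basis: putting $s = 0$ in (\ref{<K>}) gives $\langle\mu|K|\nu\rangle = \delta_{\mu\nu}\kappa(\mu)$, so $q^{-K/2}$ acts on $\langle\tp{\nu}|$ and on $|\tp{\bar{\nu}}\rangle$ simply as multiplication by the scalars $q^{-\kappa(\tp{\nu})/2}$ and $q^{-\kappa(\tp{\bar{\nu}})/2}$ respectively. Pulling these out reduces the right-hand side of (\ref{W(q)=<>1st}) to $q^{-(\kappa(\tp{\nu})+\kappa(\tp{\bar{\nu}}))/2}\,\langle\tp{\nu}|\Gamma_{-}(q^\rho)\Gamma_{+}(q^\rho)|\tp{\bar{\nu}}\rangle$.

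Next I would insert the completeness relation $\sum_{\eta\in\calP}|\eta\rangle\langle\eta| = 1$ of the charge-$0$ sector between the two vertex operators; this is legitimate because $\Gamma_{\pm}$ preserve the charge, so only charge-$0$ intermediate states contribute. By the fermionic expression (\ref{S(x)=<>}) of the skew Schur functions, $\langle\tp{\nu}|\Gamma_{-}(q^\rho)|\eta\rangle = s_{\tp{\nu}/\eta}(q^\rho)$ and $\langle\eta|\Gamma_{+}(q^\rho)|\tp{\bar{\nu}}\rangle = s_{\tp{\bar{\nu}}/\eta}(q^\rho)$, so the matrix element becomes $\sum_{\eta\in\calP}s_{\tp{\nu}/\eta}(q^\rho)s_{\tp{\bar{\nu}}/\eta}(q^\rho)$, which is a finite sum because the summand vanishes unless $\eta\subseteq\tp{\nu}$ and $\eta\subseteq\tp{\bar{\nu}}$.

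It then remains to match the scalar prefactor. Here I would invoke the standard identity $\kappa(\tp{\lambda}) = -\kappa(\lambda)$ for a partition and its conjugate — immediate from the cell-sum formula $\kappa(\lambda) = 2\sum_{(i,j)\in\lambda}(j-i)$ together with the fact that transposition interchanges the roles of $i$ and $j$ — which turns $q^{-(\kappa(\tp{\nu})+\kappa(\tp{\bar{\nu}}))/2}$ into $q^{(\kappa(\nu)+\kappa(\bar{\nu}))/2}$. Comparison with (\ref{W(q)3rd}) then finishes the proof. The argument is essentially bookkeeping, so there is no serious obstacle; the only points that need a little care are the symmetric splitting of the two $q^{-K/2}$ factors so that each lands on the appropriate external state, and the verification (or citation) of the transposition identity for $\kappa$.
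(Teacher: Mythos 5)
Your proposal is correct and is essentially the paper's own proof read in the reverse direction: both arguments rest on the same three ingredients, namely the diagonal matrix elements (\ref{<K>}) of $K$ at $s=0$, the partition of unity in the charge-$0$ sector combined with the fermionic formula (\ref{S(x)=<>}) for skew Schur functions, and the identity $\kappa(\tp{\lambda})=-\kappa(\lambda)$. The only cosmetic difference is that you start from the operator expression and unwind it to (\ref{W(q)3rd}), whereas the paper builds the operator expression up from (\ref{W(q)3rd}); your added justification of $\kappa(\tp{\lambda})=-\kappa(\lambda)$ via the cell-sum formula is a welcome detail the paper leaves implicit.
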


\begin{proof}
By the fermionic formula (\ref{S(x)=<>}) 
of the skew Schur functions and the partition of unity 
$1 = \sum_{\eta\in\calP}|\eta\rangle\langle\eta|$ 
in the charge-$0$ sector, we can express the sum 
on the right hand side of (\ref{W(q)3rd}) as 
\[
\begin{aligned}
  \sum_{\eta\in\calP}s_{\tp{\nu}/\eta}(q^\rho)s_{\tp{\bar{\nu}}/\eta}(q^\rho) 
  &= \sum_{\eta\in\calP}\langle\tp{\nu}|\Gamma_{-}(q^\rho)|\eta\rangle
    \langle\eta|\Gamma_{+}(q^\rho)|\tp{\bar{\nu}}\rangle \\
  &= \langle\tp{\nu}|\Gamma_{-}(q^\rho)\Gamma_{+}(q^\rho)
     |\tp{\bar{\nu}}\rangle. 
\end{aligned}
\]
To move the remaining $q$-factors inside 
$\langle\tp{\nu}|\cdots|\tp{\bar{\nu}}\rangle$, 
we use the relation 
\[
  \kappa(\tp{\nu}) = - \kappa(\nu)
\]
and the formula (\ref{<K>}) of the matrix elements of $K$ as 
\[
\begin{aligned}
   q^{\kappa(\nu)/2+\kappa(\bar{\nu})/2}
    \langle\tp{\nu}|\cdots|\tp{\bar{\nu}}\rangle
  &= q^{-\kappa(\tp{\nu})/2}\langle\tp{\nu}|\cdots
      |\tp{\bar{\nu}}\rangle q^{-\kappa(\tp{\bar{\nu}})/2}\\
  &= \langle\tp{\nu}|q^{-K/2}\cdots q^{-K/2}|\tp{\bar{\nu}}\rangle. 
\end{aligned}
\]
\end{proof}

(\ref{W(q)=<>1st}) is an expression of the rational function 
$\calW_{\nu\bar{\nu}}(q)$ in the region $|q| > 1$.  
The vertex operators in this expression can be computed as 
\[
  \Gamma_\pm(q^\rho) 
  = \exp\left(\sum_{k,i=1}^\infty\frac{q^{-(i-1/2)k}}{k}J_{\pm k}\right) 
  = \exp\left(\sum_{k=1}^\infty\frac{q^{-k/2}}{k(1-q^{-k})}J_{\pm k}\right), 
\]
and this computation is valid only in the region $|q| > 1$.  

If we can now start from the last operator 
(whose matrix elements are rational functions of $q^{1/2}$) 
and rewrite it as 
\[
    \exp\left(\sum_{k=1}^\infty\frac{q^{-k/2}}{k(1-q^{-k})}J_{\pm k}\right) 
  =  \exp\left(- \sum_{k=1}^\infty\frac{q^{k/2}}{k(1-q^k)}J_{\pm k}\right),
\]
we can proceed in an opposite direction as 
\[
  \exp\left(- \sum_{k=1}^\infty\frac{q^{k/2}}{k(1-q^k)}J_{\pm k}\right) 
    = \exp\left(- \sum_{k,i=1}^\infty\frac{q^{(i-1/2)k}}{k}J_{\pm k}\right) 
  = \Gamma'_\pm(-q^{-\rho}). 
\]
Note that this computation is valid in the region $|q| < 1$. 

These considerations show that the operator $\Gamma_\pm(q^\rho)$ 
in the region $|q| > 1$ and the operator $\Gamma'_\pm(- q^{-\rho})$ 
are {\it analytic continuation} of each other.  
We are thus led to the following expression of $\calW_{\nu\bar{\nu}}(q)$ 
that is valid in the region $|q| < 1$: 
\beq
  \calW_{\nu\bar{\nu}}(q) 
  = \langle\tp{\nu}|q^{-K/2}\Gamma'_{-}(-q^{-\rho})
    \Gamma'_{+}(-q^{-\rho})q^{-K/2}|\tp{\bar{\nu}}\rangle. 
  \label{W(q)=<>2nd}
\eeq

Actually, this is an intermediate stage. 
We rewrite it further as follows. 

\begin{lemma}
\beq
  \calW_{\nu\bar{\nu}}(q) 
  = (-1)^{|\nu|+|\bar{\nu}|}\langle\nu|q^{K/2}\Gamma_{-}(q^{-\rho})
    \Gamma_{+}(q^{-\rho})q^{K/2}|\bar{\nu}\rangle. 
  \label{W(q)=<>3rd}
\eeq
\end{lemma}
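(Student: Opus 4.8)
The plan is to massage the intermediate formula~(\ref{W(q)=<>2nd}) into~(\ref{W(q)=<>3rd}) in three moves: (i) pull the two outer $q^{-K/2}$ through the end states to expose a scalar prefactor; (ii) rewrite the primed vertex operators $\Gamma'_\pm(-q^{-\rho})$ as ordinary ones by feeding the already-computed bosonic form into the skew-Schur identity $S_{\mu/\lambda}(-\bsp)=(-1)^{|\mu|+|\lambda|}S_{\tp\mu/\tp\lambda}(\bsp)$; and (iii) fold the resulting sum back into a single fermionic matrix element. For step (i), the charge-$0$ specialisation of~(\ref{<K>}) gives $K|\lambda\rangle=\kappa(\lambda)|\lambda\rangle$, and combined with $\kappa(\tp\lambda)=-\kappa(\lambda)$ (the relation already used in the proof of the previous lemma) this yields $\langle\tp\nu|q^{-K/2}=q^{\kappa(\nu)/2}\langle\tp\nu|$ and $q^{-K/2}|\tp{\bar\nu}\rangle=q^{\kappa(\bar\nu)/2}|\tp{\bar\nu}\rangle$. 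Inserting the partition of unity $1=\sum_{\eta\in\calP}|\eta\rangle\langle\eta|$ between the two vertex operators in~(\ref{W(q)=<>2nd}) then gives
\[
  \calW_{\nu\bar\nu}(q)=q^{(\kappa(\nu)+\kappa(\bar\nu))/2}\sum_{\eta\in\calP}
  \langle\tp\nu|\Gamma'_-(-q^{-\rho})|\eta\rangle\,\langle\eta|\Gamma'_+(-q^{-\rho})|\tp{\bar\nu}\rangle.
\]

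For step (ii), the computation carried out just before~(\ref{W(q)=<>2nd}) shows that $\Gamma'_\pm(-q^{-\rho})=\exp\bigl(-\sum_{k\ge1}\tfrac1k p_k(q^{-\rho})J_{\pm k}\bigr)$ with $p_k(q^{-\rho})=\sum_{i\ge1}q^{(i-1/2)k}$, i.e.\ $\Gamma'_\pm(-q^{-\rho})$ differs from $\exp\bigl(\sum_k\tfrac1k p_kJ_{\pm k}\bigr)$ only by negating the power-sum variables $p_k$. Hence, by~(\ref{S(p)=<>}), the two matrix elements above are the skew Schur functions $S_{\tp\nu/\eta}$ and $S_{\tp{\bar\nu}/\eta}$ evaluated at $p_k=-p_k(q^{-\rho})$. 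Applying $S_{\mu/\lambda}(-\bsp)=(-1)^{|\mu|+|\lambda|}S_{\tp\mu/\tp\lambda}(\bsp)$ to each (and using $|\tp\nu|=|\nu|$), the first factor becomes $(-1)^{|\nu|+|\eta|}s_{\nu/\tp\eta}(q^{-\rho})$ and the second $(-1)^{|\bar\nu|+|\eta|}s_{\bar\nu/\tp\eta}(q^{-\rho})$, so the two copies of $(-1)^{|\eta|}$ cancel and only the constant sign $(-1)^{|\nu|+|\bar\nu|}$ survives:
\[
  \calW_{\nu\bar\nu}(q)=(-1)^{|\nu|+|\bar\nu|}q^{(\kappa(\nu)+\kappa(\bar\nu))/2}
  \sum_{\eta\in\calP}s_{\nu/\tp\eta}(q^{-\rho})\,s_{\bar\nu/\tp\eta}(q^{-\rho}).
\]
Step (iii) is then formal: the substitution $\eta\mapsto\tp\eta$ is a bijection of $\calP$, so the sum equals $\sum_{\eta}s_{\nu/\eta}(q^{-\rho})s_{\bar\nu/\eta}(q^{-\rho})$, which by~(\ref{S(x)=<>}) and $1=\sum_\eta|\eta\rangle\langle\eta|$ is $\langle\nu|\Gamma_-(q^{-\rho})\Gamma_+(q^{-\rho})|\bar\nu\rangle$; finally one reabsorbs $q^{(\kappa(\nu)+\kappa(\bar\nu))/2}$ as $q^{K/2}$ acting on $\langle\nu|$ and on $|\bar\nu\rangle$, obtaining~(\ref{W(q)=<>3rd}). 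As in~(\ref{W(q)=<>2nd}), everything lives in the region $|q|<1$.

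The one genuinely delicate point is keeping the two meanings of ``$-$'' apart: the argument $-q^{-\rho}$ negates the \emph{entries} of the sequence, whereas the identity $S_{\mu/\lambda}(-\bsp)=(-1)^{|\mu|+|\lambda|}S_{\tp\mu/\tp\lambda}(\bsp)$ negates the \emph{power sums}. The content of the argument is precisely that the sign/transpose built into the passage $\Gamma'\!\to\!\Gamma$ (equivalently, into $\kappa(\tp\nu)=-\kappa(\nu)$ and the conjugate partitions appearing in~(\ref{<Gamma'>}) and~(\ref{S(p)=<>})) combines with the entrywise sign in $-q^{-\rho}$ so that exactly one net negation of the power sums remains, making the single skew-Schur identity directly applicable. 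Once that bookkeeping is set up, the cancellation $(-1)^{2|\eta|}=1$ and the involution $\eta\mapsto\tp\eta$ do the rest, and the identification of the resummed expression with a fermionic matrix element is immediate.
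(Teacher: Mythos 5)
Your proof is correct and reaches (\ref{W(q)=<>3rd}) from (\ref{W(q)=<>2nd}) by the same overall strategy as the paper: peel off the two $q^{-K/2}$ factors as the scalar $q^{(\kappa(\nu)+\kappa(\bar{\nu}))/2}$ via $\kappa(\tp{\nu})=-\kappa(\nu)$, convert the primed vertex operators at $-q^{-\rho}$ into unprimed ones at $q^{-\rho}$ at the cost of the sign $(-1)^{|\nu|+|\bar{\nu}|}$ and a transposition of the end partitions, and reabsorb the scalar as $q^{K/2}$. The only real difference is how the middle step is packaged. The paper stays at the operator level: it removes the minus sign in the argument by conjugating with $(-1)^{L_0}$ (using $[L_0,J_k]=-kJ_k$), which yields $(-1)^{|\nu|+|\bar{\nu}|}$, and then invokes the correspondence between the matrix elements (\ref{<Gamma>}) and (\ref{<Gamma'>}) to trade $\Gamma'_{\pm}(q^{-\rho})$ sandwiched between $\langle\tp{\nu}|$ and $|\tp{\bar{\nu}}\rangle$ for $\Gamma_{\pm}(q^{-\rho})$ between $\langle\nu|$ and $|\bar{\nu}\rangle$. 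You instead insert a partition of unity and apply the single skew-Schur identity $S_{\mu/\lambda}(-\bsp)=(-1)^{|\mu|+|\lambda|}S_{\tp{\mu}/\tp{\lambda}}(\bsp)$ componentwise, which performs the sign extraction and the transposition in one move; the two copies of $(-1)^{|\eta|}$ cancel and the reindexing $\eta\mapsto\tp{\eta}$ resums the series. The two routes encode identical content --- the identity you use is precisely the symmetric-function shadow of the paper's two operator facts --- so neither is more general; yours makes the sign bookkeeping fully explicit at the price of an extra insertion and resummation, while the paper's is shorter because it never leaves the operator algebra. All the individual steps you take (the charge-$0$ eigenvalue $K|\lambda\rangle=\kappa(\lambda)|\lambda\rangle$, the reading of $\Gamma'_{\pm}(-q^{-\rho})$ as the exponential with power sums $-p_k(q^{-\rho})$, and the finiteness of the $\eta$-sums) check out.
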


\begin{proof}
Let us rewrite the right hand side of (\ref{W(q)=<>2nd}) as 
\[
\begin{aligned}
  & \langle\tp{\nu}|q^{-K/2}\Gamma'_{-}(-q^{-\rho})
    \Gamma'_{+}(-q^{-\rho})q^{-K/2}|\tp{\bar{\nu}}\rangle\\
  &= q^{(\kappa(\nu)+\kappa(\bar{\nu}))/2}
     \langle\tp{\nu}|\Gamma'_{-}(-q^{-\rho})\Gamma'_{+}(-q^{-\rho})
     |\tp{\bar{\nu}}\rangle.
\end{aligned}
\]
Since $L_0$ and $J_k$'s satisfy the commutation relations
\[
  [L_0,J_k] = -kJ_k, 
\]
the negative sign in front of $q^{-\rho}$ can be eliminated 
by the adjoint action of $(-1)^{L_0}$ as 
\[
  \Gamma'_\pm(-q^{-\rho}) = (-1)^{L_0}\Gamma'_\pm(q^{-\rho})(-1)^{L_0}. 
\]
Consequently,  
\[
  \langle\tp{\nu}|\Gamma'_{-}(-q^{-\rho})\Gamma'_{+}(-q^{-\rho})
  |\tp{\bar{\nu}}\rangle 
  = (-1)^{|\nu|+|\bar{\nu}|}\langle\tp{\nu}|\Gamma'_{-}(q^{-\rho})
    \Gamma'_{+}(q^{-\rho})|\tp{\bar{\nu}}\rangle. 
\]
Since the matrix elements (\ref{<Gamma>}) and (\ref{<Gamma'>}) 
of the two types of vertex operators correspond to each other 
by transposing the partitions, we have the identity 
\[
  \langle\tp{\nu}|\Gamma'_{-}(q^{-\rho})\Gamma'_{+}(q^{-\rho})
  |\tp{\bar{\nu}}\rangle 
  = \langle\nu|\Gamma_{-}(q^{-\rho})\Gamma_{+}(q^{-\rho})|\bar{\nu}\rangle.
\]
Collecting these formulae yields (\ref{W(q)=<>3rd})
\end{proof}

We use the last fermionic expression (\ref{W(q)=<>3rd}) 
to convert $R^\bullet(\tau,\bsp,\bar{\bsp})$ 
to a tau function of the 2D Toda hierarchy.  
As mentioned above, this expression itself is valid 
in the region $|q| < 1$.  It is easy to see that 
the first expression (\ref{W(q)=<>1st}) is connected 
with this expression by the inversion $q \to q^{-1}$: 
\beq
  \calW_{\nu\bar{\nu}}(q) 
  = (-1)^{|\nu|+|\bar{\nu}}W_{\tp{\nu}\tp{\bar{\nu}}}(q^{-1}). 
\eeq
This is exactly the inversion relation (\ref{W-sym}) 
mentioned in the end of the previous section. 
We have derived it from a slightly different route.

\subsection{Lifting $R^\bullet(\tau,\bsp,\bar{\bsp})$ to tau function}

All building blocks of the definition 
(\ref{Rb-def}) of $R^\bullet(\tau,\bsp,\bar{\bsp})$ 
are now translated to the language of fermions. 
This leads to the following fermionic expression 
of $R^\bullet(\tau,\bsp,\bar{\bsp})$. 

\begin{thm}\label{Rb=<>-thm}
\beq
  R^\bullet(\tau,\bsp,\bar{\bsp}) 
  = \langle 0|\exp\left(\sum_{k=1}^\infty\frac{(-1)^kp_k}{k}J_k\right)
    h\exp\left(\sum_{k=1}^\infty\frac{(-1)^k\bar{p}_k}{k}J_{-k}\right)|0\rangle,
  \label{Rb=<>}
\eeq
where 
\beq
  h = q^{(\tau+1)K/2}\Gamma_{-}(q^{-\rho})\Gamma_{+}(q^{-\rho})q^{(\tau^{-1}+1)K/2}. 
  \label{h-def}
\eeq
\end{thm}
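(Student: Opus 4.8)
The plan is to start from the combinatorial series $R^\bullet(\tau,\bsp,\bar{\bsp})$ in~(\ref{Rb-def}) and insert the last fermionic expression~(\ref{W(q)=<>3rd}) of the two-leg vertex $\calW_{\nu\bar{\nu}}(q)$, together with the fermionic form~(\ref{S(p)=<>}) of the Schur polynomials $S_\nu(\bsp)$ and $S_{\bar{\nu}}(\bar{\bsp})$. Concretely, $S_\nu(\bsp) = \langle 0|\exp(\sum_k (p_k/k)J_k)|\nu\rangle$ and $S_{\bar{\nu}}(\bar{\bsp}) = \langle\bar{\nu}|\exp(\sum_k (\bar{p}_k/k)J_{-k})|0\rangle$, so each term of the double sum over $\nu,\bar{\nu}$ becomes a product of three matrix elements sharing the intermediate states $|\nu\rangle$ and $|\bar{\nu}\rangle$. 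The sums $\sum_{\nu\in\calP}|\nu\rangle\langle\nu|$ and $\sum_{\bar{\nu}\in\calP}|\bar{\nu}\rangle\langle\bar{\nu}|$ are the partitions of unity in the charge-$0$ sector, so they collapse and produce a single vacuum expectation value of a product of operators.

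Next I would absorb the genus/framing prefactor $q^{(\kappa(\nu)\tau + \kappa(\bar{\nu})\tau^{-1})/2}$ into the fermionic operators. Using the matrix-element formula~(\ref{<K>}) for $K$ in the charge-$0$ sector, $\langle\nu|K|\nu\rangle = \kappa(\nu)$, so $q^{\kappa(\nu)\tau/2}$ acting on $|\nu\rangle$ can be written as $q^{\tau K/2}$ inserted to the right of the first factor; likewise $q^{\kappa(\bar{\nu})\tau^{-1}/2}$ on $|\bar{\nu}\rangle$ becomes $q^{\tau^{-1}K/2}$ to the left of the last factor. The expression~(\ref{W(q)=<>3rd}) already contributes $q^{K/2}\Gamma_{-}(q^{-\rho})\Gamma_{+}(q^{-\rho})q^{K/2}$, flanked by $|\nu\rangle,|\bar{\nu}\rangle$, and combining the $K$-exponents on the two sides yields $q^{(\tau+1)K/2}$ on the left and $q^{(\tau^{-1}+1)K/2}$ on the right, which is exactly the operator $h$ of~(\ref{h-def}).

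Finally I would deal with the sign factors. The vertex expression~(\ref{W(q)=<>3rd}) carries $(-1)^{|\nu|+|\bar{\nu}|}$, and this sign should be moved into the Schur factors. Since $|\nu|$ is the $L_0$-eigenvalue of $|\nu\rangle$ in the charge-$0$ sector (by~(\ref{<L0>}) with $s=0$), we have $(-1)^{|\nu|}|\nu\rangle = (-1)^{L_0}|\nu\rangle$, and the adjoint action $(-1)^{L_0}J_k(-1)^{L_0} = (-1)^k J_k$ (from $[L_0,J_k] = -kJ_k$, as used in the proof of the preceding lemma) converts $\exp(\sum_k (p_k/k)J_k)$ into $\exp(\sum_k ((-1)^k p_k/k)J_k)$; symmetrically for the $\bar{p}$-side with $J_{-k}$. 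Collecting everything gives precisely~(\ref{Rb=<>}). The only genuinely delicate point is the region-of-validity bookkeeping for the vertex operators $\Gamma_\pm(q^{-\rho})$ versus $\Gamma_\pm(q^\rho)$ and the analytic continuation $q\mapsto q^{-1}$; but this has already been settled in the passage leading to~(\ref{W(q)=<>3rd}), so here it suffices to note that~(\ref{W(q)=<>3rd}) is the form we use and the resulting identity is one of rational functions (equivalently formal power series in $\hbar$) in the appropriate region. The bulk of the argument is then a routine bosonization bookkeeping, and I expect the main obstacle to be merely keeping the three sign/framing insertions consistent across the two sides.
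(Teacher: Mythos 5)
Your proposal is correct and follows essentially the same route as the paper: insert (\ref{W(q)=<>3rd}), (\ref{S(p)=<>}) and the $K$-matrix elements (\ref{<K>}) into (\ref{Rb-def}), collapse the double sum with the partition of unity in the charge-$0$ sector, and absorb the sign $(-1)^{|\nu|+|\bar{\nu}|}$ into the Schur factors. The only cosmetic difference is that you justify the sign absorption by conjugating with $(-1)^{L_0}$, whereas the paper simply invokes the weighted homogeneity $(-1)^{|\nu|}S_\nu(\bsp)=S_\nu(\ldots,(-1)^k p_k,\ldots)$; these are equivalent.
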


\begin{proof}
Let us apply the fermionic formulae (\ref{<K>}), (\ref{S(p)=<>}) 
and (\ref{W(q)=<>3rd}) to the building blocks of (\ref{Rb-def}). 
We want to achieve the summation over $\nu,\bar{\nu} \in \calP$ 
with the aid of the partition of unity 
\[
  1 = \sum_{\nu\in\calP}|\nu\rangle\langle\nu| 
    = \sum_{\bar{\nu}\in\calP}|\bar{\nu}\rangle\langle\bar{\nu}| 
\]
in the charge-$0$ sector.  An obstacle is the sign factor 
$(-1)^{|\nu|+|\bar{\nu}|}$.  This factor can be absorbed 
by the Schur functions as 
\[
  (-1)^{|\nu|+|\bar{\nu}|}S_\nu(\bsp)S_{\bar{\bar{\nu}}}(\bar{\bsp}) 
  = S_\nu(\ldots,(-1)^kp_k,\ldots)
    S_{\bar{\nu}}(\ldots,(-1)^k\bar{p}_k,\ldots). 
\]
We can thus use the partition of unity to obtain (\ref{Rb=<>}). 
\end{proof}

(\ref{Rb=<>}) is very close to a general fermionic expression 
of tau functions of the 2D Toda hierarchy \cite{Takebe91,Zabrodin13} 
(see also ref. \cite{Takasaki18a}, Section 3.3). 
A tau function $\calT(s,\bst,\bar{\bst})$ 
of the lattice coordinate $s$ and the time variables 
$\bst = (t_k)_{k=1}^\infty$ and $\bar{\bst} = (\bar{t}_k)_{k=1}^\infty$ 
can be obtained by replacing 
\[
\begin{gathered}
  \frac{(-1)^kp_k}{k} \to t_k, \quad 
  \frac{(-1)^p\bar{p}_k}{k} \to - \bar{t}_k,\\
  \langle 0| \to \langle s|,\quad 
  |0\rangle \to |s\rangle,\quad s \in \ZZ. 
\end{gathered}
\]
as 
\beq
  \calT(s,\bst,\bar{\bst}) 
  = \langle s|\exp\left(\sum_{k=1}^\infty t_kJ_k\right)
    h\exp\left(- \sum_{k=1}^\infty\bar{t}_kJ_{-k}\right)|s\rangle. 
  \label{Tau-def}
\eeq
This is the tau function constructed by Zhou \cite{Zhou0310408}.

\subsection{$s$-dependence of $\calT(s,\bst,\bar{\bst})$}

By twice inserting the partition of unity as we have done 
in the proof of Theorem \ref{Rb=<>-thm}, we can expand 
$\calT(s,\bst,\bar{\bst})$ into a double sum over partitions: 
\beq
\begin{aligned}
  \calT(s,\bst,\bar{\bst}) 
  &= \sum_{\nu,\bar{\nu}\in\calP}
     q^{(\tau+1)(\kappa(\nu)/2+s|\nu|+(4s^3-s)/24)}\\
  &\qquad\mbox{}\times
     q^{(\tau^{-1}+1)(\kappa(\bar{\nu})/2+s|\bar{\nu}|+(4s^3-s)/24)}\\
  &\qquad\mbox{}\times
     \langle\nu|\Gamma_{-}(q^{-\rho})\Gamma_{+}(q^{-\rho})|\bar{\nu}\rangle
     \rmS_\nu(\bst)\rmS_{\bar\nu}(\bar{\bst}).  
\end{aligned}
\label{Tau-dsum}
\eeq
$\rmS_\nu(\bst)$ and $\rmS_{\bar{\nu}}(\bar{\bst})$ 
are the Schur functions $S_\nu(\bsp)$ and $S_{\bar{\nu}}(\bar{\bsp})$ 
regarded as functions of $\bst$ and $\bar{\bst}$ 
by the relation 
\[
  p_k = kt_k, \quad \bar{p}_k = k\bar{t}_k,\quad k = 1,2,\ldots. 
\]
The $s$-dependent factors come from the matrix elements 
(\ref{<K>}) of $K$.

Although $s$ is originally a lattice coordinate, 
hence a discrete variable, (\ref{Tau-dsum}) hints 
that $s$ may be thought of as a continuous variable. 
The following fact shows that this point of view 
is meaningful enough. 

\begin{thm}\label{Tau(s+c)-thm}
For any constant $c$, 
the function $\calT(s+c,\bst,\bar{\bst})$ restricted 
to $s \in \ZZ$ is a tau function of the 2D Toda hierarchy. 
\end{thm}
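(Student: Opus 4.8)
The plan is to exploit the general fermionic criterion for tau functions of the 2D Toda hierarchy: a function of the form $\langle s|\exp(\sum t_k J_k)\,g\,\exp(-\sum\bar t_k J_{-k})|s\rangle$ is a tau function whenever $g$ is an element of (a suitable completion of) $GL(\infty)$, i.e.\ $g$ is built from exponentials of fermion bilinears and is invertible. So the whole statement reduces to showing that shifting $s$ by a constant $c$ amounts to replacing $h$ by another admissible $GL(\infty)$ element $h_c$. First I would write, using the explicit form $h = q^{(\tau+1)K/2}\Gamma_-(q^{-\rho})\Gamma_+(q^{-\rho})q^{(\tau^{-1}+1)K/2}$ and the fact that $\langle s| = \langle 0|\,g_s^{-1}$, $|s\rangle = g_s|0\rangle$ for the appropriate shift operator $g_s$ implementing the charge shift, the relation
\[
  \calT(s+c,\bst,\bar{\bst})
  = \langle s|\,g_c^{-1}\exp\Bigl(\sum_k t_kJ_k\Bigr)h\exp\Bigl(-\sum_k\bar t_kJ_{-k}\Bigr)g_c|s\rangle.
\]
Then I would commute $g_c^{\pm 1}$ past the time-evolution exponentials; since conjugation by $g_c$ rescales $J_{\pm k}$ by a scalar ($g_c^{-1}J_{\pm k}g_c = q^{\mp ck}J_{\pm k}$ or similar, depending on the precise convention for $g_c$ in terms of $K$ and $J_0$), this only rescales the time variables $t_k\mapsto q^{-ck}t_k$, $\bar t_k\mapsto q^{ck}\bar t_k$. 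A linear rescaling $t_k\mapsto\alpha^k t_k$ of the times sends a 2D Toda tau function to another 2D Toda tau function (it is a symmetry of the hierarchy), so this part introduces no difficulty.

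The substantive point is then to absorb the leftover conjugation into $h$: one sets $h_c = g_c^{-1}hg_c$ and must check that $h_c$ is still an admissible $GL(\infty)$ element, so that $\langle s|\exp(\sum t_kJ_k)h_c\exp(-\sum\bar t_kJ_{-k})|s\rangle$ is a bona fide 2D Toda tau function in the variables $s\in\ZZ$. Because $h$ is a product of $q$-exponentials of $K$ (a quadratic fermion bilinear) and the vertex operators $\Gamma_\pm$ (exponentials of linear combinations of the $J_{\pm k}$, hence also bilinears), and because $g_c$ is likewise an exponential of bilinears, $h_c$ is manifestly again an exponential-of-bilinears element of $GL(\infty)$; the only care needed is the usual convergence/well-definedness of the infinite products, which is inherited from that of $h$ itself (already implicit in the construction of $\calT$). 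Alternatively, and perhaps more cleanly, I would argue directly from the double-sum expansion \eqref{Tau-dsum}: substituting $s\to s+c$ there produces exactly the same series with $\kappa(\nu)$-independent exponential prefactors modified by $q^{(\tau+1)(c|\nu|+\cdots)}q^{(\tau^{-1}+1)(c|\bar\nu|+\cdots)}$ and overall $(\nu,\bar\nu)$-independent factors; the $|\nu|$- and $|\bar\nu|$-dependent pieces can be reabsorbed into $\rmS_\nu(\bst)$, $\rmS_{\bar\nu}(\bar{\bst})$ by rescaling $\bst,\bar{\bst}$, and the $(\nu,\bar\nu)$-independent factor is an irrelevant overall constant, recovering a series of the same shape \eqref{Tau-dsum} with a new admissible $h$.

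The main obstacle I anticipate is bookkeeping rather than conceptual: pinning down the precise operator $g_c$ that implements $\langle s+c| = \langle s|g_c^{-1}$ once $c$ is a \emph{non-integer} constant. For integer $c$ the shift is literally realized by a product of fermion modes, but the point of the theorem is to treat $s$ as continuous, so $g_c$ must be defined as a $q$-exponential $q^{cJ_0}$ (times possibly a $K$-exponential factor to match the $(4s^3-s)/24$ cubic term correctly), and one has to verify that the resulting conjugation relations with $K$ and $J_{\pm k}$ are the clean scalar-rescaling identities claimed above. Once those conjugation formulae are in hand — they follow from $[J_0,J_{\pm k}] = 0$, $[K,J_{\pm k}] = \mp 2kJ_{\pm k} + (\text{lower})$, handled exactly as in the proof of Lemma~\ref{W(q)=<>1st} (our earlier "moving $q$-factors inside" computation) — the rest is immediate. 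I would therefore structure the proof as: (1) define $g_c$; (2) record its commutation relations with $J_0,K,J_{\pm k}$; (3) rewrite $\calT(s+c,\bst,\bar{\bst}) = (\text{const})\cdot\langle s|\exp(\sum t_k'J_k)h_c\exp(-\sum\bar t_k'J_{-k})|s\rangle$ with rescaled times and $h_c = g_c^{-1}hg_c\in GL(\infty)$; (4) invoke the fermionic characterization of 2D Toda tau functions together with invariance under rescaling of the time variables.
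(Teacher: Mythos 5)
Your primary route --- realizing the shift $s\to s+c$ by an operator $g_c$ with $\langle s+c|=\langle s|g_c^{-1}$ and then conjugating --- has a genuine problem, not just a bookkeeping one. For non-integer $c$ there is no charge-$(s+c)$ sector in the Fock space, so no such $g_c$ exists; and the candidates you name cannot do the two jobs you assign to them simultaneously. A diagonal operator such as $q^{cJ_0}$ or $q^{cL_0}$ does not move $\langle s|$ to a different charge sector at all (on the charge-$s$ sector $q^{cJ_0}$ is just the scalar $q^{cs}$), while a genuine charge-shift operator commutes with all $J_{\pm k}$ and hence does not rescale the times. Moreover $[K,J_{\pm k}]$ is not a scalar multiple of $J_{\pm k}$ (in the matrix picture $[(\Delta-1/2)^2,\Lambda^k]=\Lambda^k(2k(\Delta-1/2)+k^2)$), so conjugating the time-evolution exponentials by a $K$-exponential does not merely rescale $t_k$; the proof of the lemma you cite never commutes $q^{-K/2}$ through $\Gamma_\pm$, it only evaluates it on the end states. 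Finally, note that the paper's modified operator carries \emph{different} diagonal factors on the left and on the right (weighted by $\tau+1$ and $\tau^{-1}+1$ respectively), so it is not of the conjugated form $g_c^{-1}hg_c$ for any single $g_c$.

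Your alternative route is the right one and is essentially the paper's proof: work directly with the double sum \eqref{Tau-dsum}. The paper observes that
\[
\kappa(\nu)/2+(s+c)|\nu|+\tfrac{4(s+c)^3-(s+c)}{24}
=\langle\nu,s|\bigl(K/2+cL_0+(c^2-c)J_0/2\bigr)|\nu,s\rangle+\tfrac{4c^3-c}{24},
\]
so that $\calT(s+c,\bst,\bar{\bst})$ is again of the form \eqref{Tau-def} with $h$ replaced by
$h(c)=q^{(\tau+\tau^{-1}+2)(4c^3-c)/24}\,q^{(\tau+1)(K/2+cL_0+(c^2-c)J_0/2)}\Gamma_{-}(q^{-\rho})\Gamma_{+}(q^{-\rho})\,q^{(\tau^{-1}+1)(K/2+cL_0+(c^2-c)J_0/2)}$,
still a product of exponentials of fermion bilinears --- no time rescaling needed. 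Two corrections to your version of this argument: (i) the ``$(\nu,\bar\nu)$-independent factor'' you discard is \emph{not} a constant --- expanding the cube produces $s^2c/2+sc^2/2$, so the leftover prefactor is $q^{Q(s)}$ with $Q$ quadratic in $s$; it is harmless only if you additionally invoke the gauge freedom of tau functions under multiplication by such factors, whereas the paper absorbs exactly these terms into the matrix elements of $cL_0$ and $(c^2-c)J_0/2$; (ii) absorbing $q^{(\tau+1)c|\nu|}$ and $q^{(\tau^{-1}+1)c|\bar\nu|}$ into rescaled times works via $\rmS_\nu(\ldots,\alpha^kt_k,\ldots)=\alpha^{|\nu|}\rmS_\nu(\bst)$, but again at the cost of a further $s$-dependent normalization; inserting $q^{cL_0}$ into $h$ avoids this entirely.
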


\begin{proof}
When $s$ is shifted to $s + c$, the exponents of 
the two exponential factors in (\ref{Tau-dsum}) vary as 
\[
\begin{aligned}
  &\kappa(\nu)/2+s|\nu|+(4s^3-s)/24\\
  &\to \kappa(\nu)/2+(s+c)|\nu|+(4(s+c)^3-(s+c))/24) \\
  &= \langle\nu,s|(K/2 + cL_0 + (c^2-c)J_0/2)|\nu,s\rangle 
     + (4c^3 - c)/24 
\end{aligned}
\]
and 
\[
\begin{aligned}
  &\kappa(\bar{\nu})/2+s|\bar{\nu}|+(4s^3-s)/24\\
  &\to \kappa(\bar{\nu})/2+(s+c)|\bar{\nu}|+(4(s+c)^3-(s+c)/24) \\
  &= \langle\bar{\nu},s|(K/2 + cL_0 + (c^2-c)J_0/2)|\bar{\nu},s\rangle 
     + (4c^3 - c)/24. 
\end{aligned}
\]
This implies that the shifted tau function $\calT(s+c,\bst,\bar{\bst})$ 
can be expressed in the fermionic form (\ref{Tau-def}) 
with $h$ replaced by 
\[
\begin{aligned}
  h(c) &= q^{(\tau+\tau^{-1}+2)(4c^3-c)/24)}q^{(\tau+1)(K/2+cL_0+(c^2-c)J_0/2)}\\
       &\quad\mbox{}\times 
        \Gamma_{-}(q^{-\rho})\Gamma_{+}(q^{-\rho})
        q^{(\tau^{-1}+1)(K/2+cL_0+(c^2-c)J_0/2)}. 
\end{aligned}
\]
\end{proof}

The tau function $\calT(s,\bst,\bar{\bst})$ thus yields 
a solution of the Toda hierarchy on any integral sublattice 
$\ZZ + c \subset \RR$.  In the following section, 
we consider this solution in the Lax formalism.

\section{Perspectives in Lax formalism}

\subsection{Fractional powers of Lax operators}

Let $\Lambda$ denote the shift operator in the variable $s$: 
\[
  \Lambda = e^{\rd_s}, \quad \Lambda^nf(s) = f(s+n). 
\]
The Lax formalism of the 2D Toda hierarchy 
uses two (so to speak, {\it pseudo-difference}) 
operators $L,\bar{L}$ of the form
\[
\begin{gathered}
  L = \Lambda + \sum_{n=1}^\infty u_n\Lambda^{1-n},\quad 
  \bar{L}^{-1} = \sum_{n=0}^\infty\bar{u}_n\Lambda^{n-1},\\
  u_n = u_n(s,\bst,\bar{\bst}), \quad 
  \bar{u}_n = \bar{u}_n(s,\bst,\bar{\bst}), 
\end{gathered}
\]
that satisfy the Lax equations 
\[
\begin{gathered}
  \frac{\rd L}{\rd t_k} = [B_k,L],\quad 
  \frac{\rd L}{\rd\bar{t}_k} = [\bar{B}_k,L],\\
  \frac{\rd \bar{L}}{\rd t_k} = [B_k,\bar{L}],\quad 
  \frac{\rd \bar{L}}{\rd\bar{t}_k} = [\bar{B}_k,\bar{L}],\\
  B_k = (L^k)_{\geq 0},\quad \bar{B}_k = (\bar{L}^{-k})_{<0}, 
\end{gathered}
\]
where $(\quad)_{\ge 0}$ and $(\quad)_{<0}$ denote 
the projection onto the non-negative and negative power parts 
of difference operators: 
\[
  \left(\sum_{n\in\ZZ}a_n\Lambda^n\right)_{\ge 0} 
    = \sum_{n\geq 0}a_n\Lambda^n,\quad 
  \left(\sum_{n\in\ZZ}a_n\Lambda^n\right)_{<0} 
    = \sum_{n<0}a_n\Lambda^n. 
\]

By the standard procedure \cite{UT84,TT95,Takasaki18a}, 
the tau function $\calT(s,\bst,\bar{\bst})$ yields 
the Lax operators via two dressing operators 
\[
    W = 1 + \sum_{n=1}^\infty w_ne^{-n\rd_s},\quad 
  \bar{W} = \sum_{n=0}^\infty\bar{w}_ne^{n\rd_s},\quad 
  \bar{w}_0 \not= 0.
\]
The coefficients are obtained from the Laurent expansion 
\[
\begin{gathered}
  \frac{\calT(s-1,\bst-[z^{-1}],\bar{\bst})}{\calT(s-1,\bst,\bar{\bst})}
    = 1 + \sum_{n=1}^\infty w_nz^{-n},\\
  \frac{\calT(s,\bst,\bar{\bst}-[z])}{\calT(s-1,\bst,\bar{\bst})}
    = \sum_{n=0}^\infty\bar{w}_nz^n,\\
  [z] = \left(z,z^2/2,\cdots,z^k/k,\cdots\right),   
\end{gathered}
\]
of quotients of two shifted tau functions 
at $z = \infty$ and $z = 0$, respectively.  
The Lax operators are thereby expressed as 
\[
  L = W\Lambda W^{-1},\quad 
  \bar{L}^{-1} = \bar{W}\Lambda^{-1}\bar{W}^{-1}. 
\]

Since $s$ is now interpreted to be a continuous variable, 
we can define the logarithm and the fractional powers 
of $L$ and $\bar{L}$ as 
\beq
\begin{gathered}
  \log L = W\log\Lambda W^{-1},\quad 
  \log\bar{L} = \bar{W}\log\Lambda\bar{W}^{-1},\\
  L^\alpha = W\Lambda^\alpha W^{-1},\quad
  \bar{L}^\alpha = \bar{W}\Lambda^\alpha\bar{W}^{-1}.  
\end{gathered}
\eeq
Note that $\log\Lambda$ and $\Lambda^\alpha$ are 
differential and shift operators in the variable $s$: 
\[
  \log\Lambda = \rd_s,\quad 
  \Lambda^{\alpha} = e^{\alpha\rd_s}. 
\]
Consequently, $\log L$, $\log\bar{L}$, $L^\alpha$ 
and $\bar{L}^\alpha$ can be expressed as 
\[
\begin{gathered}
  \log L = \rd_s - \frac{\rd W}{\rd s}W^{-1},\quad 
  \log\bar{L} = \rd_s - \frac{\rd\bar{W}}{\rd s}\bar{W}^{-1}, \\
  L^\alpha = W\cdot W^{-1}|_{s\to s+\alpha}\cdot e^{\alpha\rd_s} 
  = (1 + p_1\Lambda^{-1} + \cdots)e^{\alpha\rd_s},\\
  \bar{L}^\alpha = \bar{W}\cdot\bar{W}^{-1}|_{s\to s+\alpha}\cdot e^{\alpha\rd_S} 
  = (\bar{p}_0 + \bar{p}_1\Lambda + \cdots)e^{\alpha\rd_s}, 
\end{gathered}
\]
where $p_1,p_2,\ldots$ and $\bar{p}_0,\bar{p}_1,\ldots$ 
are determined by the coefficients of $W$ and $\bar{W}$, 
e.g., 
\beq
  p_1 = w_1(s,\bst,\bar{\bst}) - w_1(s+\alpha,\bst,\bar{\bst}),\quad
  \bar{p}_0 = \frac{\bar{w}_0(s,\bst,\bar{\bst})}
              {\bar{w}_0(s+\alpha,\bst,\bar{\bst})}. 
\eeq
Calculus of this kind of fractional difference operators 
is discussed in the paper of Liu, Zhang and Zhou \cite{LZZ17}.

\subsection{Algebraic relations of Lax operators} 

We can now state the key theorem: 

\begin{thm}\label{key-thm}
The Lax operators obtained from the tau function 
$\calT(s,\bst,\bar{\bst})$ of (\ref{Tau-def}) 
satisfy the algebraic relation 
\beq
  L^{1/(\tau+1)} = - \bar{L}^{-1/(\tau^{-1}+1)}. 
  \label{LLbar-rel}
\eeq
\end{thm}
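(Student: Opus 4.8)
The plan is to prove (\ref{LLbar-rel}) first at the initial point $\bst=\bar{\bst}=\bszero$ of the flows and then to propagate it along the whole hierarchy. For the propagation step, recall that the powers $L^{1/(\tau+1)}=W\Lambda^{1/(\tau+1)}W^{-1}$ and $\bar{L}^{-1/(\tau^{-1}+1)}=\bar{W}\Lambda^{-1/(\tau^{-1}+1)}\bar{W}^{-1}$, like all powers of the Lax operators, satisfy the \emph{same} evolution equations $\rd X/\rd t_k=[B_k,X]$, $\rd X/\rd\bar{t}_k=[\bar{B}_k,X]$ with the \emph{common} generators $B_k=(L^k)_{\ge0}$, $\bar{B}_k=(\bar{L}^{-k})_{<0}$; this is a direct consequence of the dressing representation and the Lax equations quoted in the text (for the $\bar{L}$-side one uses $\rd\bar{L}/\rd t_k=[B_k,\bar{L}]$ and $\rd\bar{L}/\rd\bar{t}_k=[\bar{B}_k,\bar{L}]$). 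Hence the difference
\[
  Y \;=\; L^{1/(\tau+1)}+\bar{L}^{-1/(\tau^{-1}+1)}
\]
obeys the linear homogeneous system $\rd Y/\rd t_k=[B_k,Y]$, $\rd Y/\rd\bar{t}_k=[\bar{B}_k,Y]$, whose solution, regarded as a formal power series in $\bst,\bar{\bst}$ with pseudo-difference-operator coefficients, is uniquely determined by its value at $\bst=\bar{\bst}=\bszero$. It therefore suffices to prove $Y|_{\bst=\bar{\bst}=\bszero}=0$.

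To compute the initial values $W|_{\bszero}$, $\bar{W}|_{\bszero}$ of the dressing operators I would solve the factorization problem that characterises them in terms of the operator $h$ of (\ref{h-def}), following our analysis of the melting crystal model and the topological vertex in \cite{Takasaki13a,Takasaki13b,Takasaki14}. At $\bst=\bar{\bst}=\bszero$ the tau function (\ref{Tau-def}), and the shifted tau functions feeding the Laurent expansions that define $w_n$ and $\bar{w}_n$, are built out of $h$ alone; one then translates the three building blocks of $h$ into the language of (fractional-)difference operators in $s$: the vertex operators $\Gamma_\pm(q^{-\rho})$ become explicit operators that are infinite products in $\Lambda^{\pm1}$ with $q$-shifted coefficients (valid for $|q|<1$), and the two $q$-Gaussians $q^{(\tau+1)K/2}$, $q^{(\tau^{-1}+1)K/2}$ become quantum-torus-type operators whose conjugation action rescales the shift $\Lambda$. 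The Birkhoff (Gauss) factorization of the resulting operator then yields $W|_{\bszero}$, $\bar{W}|_{\bszero}$ in closed form, hence $L|_{\bszero}=W|_{\bszero}\Lambda W|_{\bszero}^{-1}$ and $\bar{L}^{-1}|_{\bszero}=\bar{W}|_{\bszero}\Lambda^{-1}\bar{W}|_{\bszero}^{-1}$. I expect this to produce an intertwining identity for the difference-operator avatar $H$ of $h$ of the shape $\Lambda^{1/(\tau+1)}H=-H\Lambda^{-1/(\tau^{-1}+1)}$, in which the two distinct exponents $\tau+1$ and $\tau^{-1}+1$ are forced by the two Gaussian factors in (\ref{h-def}) and the overall minus sign is the same $(-1)^{L_0}$ effect that mediates the passage from (\ref{W(q)=<>2nd}) to (\ref{W(q)=<>3rd}). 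Conjugating this identity by $W|_{\bszero}$ gives exactly $L^{1/(\tau+1)}|_{\bszero}=-\bar{L}^{-1/(\tau^{-1}+1)}|_{\bszero}$, and the propagation argument finishes the proof; the common value $-L^{1/(\tau+1)}=\bar{L}^{-1/(\tau^{-1}+1)}$ is the auxiliary operator $\calL$ to be used afterwards.

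The step I expect to be the main obstacle is the honest handling of the $q$-Gaussians $q^{cK/2}$. Unlike the grading operators $L_0$ and $J_0$, the operator $K$ does not merely rescale the currents $J_{\pm k}$ — its commutator with $J_m$ brings in higher fermion bilinears — so the difference-operator image of $q^{cK/2}$ is a genuinely second-order, Gaussian-type operator, and conjugating $\Lambda$ through $q^{(\tau+1)K/2}\Gamma_-(q^{-\rho})\Gamma_+(q^{-\rho})q^{(\tau^{-1}+1)K/2}$ has to be carried out with care. One must also keep track of the domains of convergence: the chain $\Gamma_\pm(q^\rho)$ for $|q|>1$, then $\Gamma'_\pm(-q^{-\rho})$, then the $|q|<1$ expression used in (\ref{W(q)=<>3rd}), is precisely where the analytic continuation — and with it the sign $(-1)^{|\nu|+|\bar{\nu}|}$ — is forced in, and it is this sign that eventually surfaces as the minus in (\ref{LLbar-rel}). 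Once the closed forms of $W|_{\bszero}$ and $\bar{W}|_{\bszero}$ are available, the verification of $Y|_{\bszero}=0$ — which in particular exhibits $L^{1/(\tau+1)}|_{\bszero}$ as a short, finite-band fractional-difference operator underlying the Volterra-type reductions of Section 5 — is a routine comparison of coefficients.
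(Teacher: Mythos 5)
Your proposal is correct and follows essentially the same route as the paper: compute the initial values of the dressing operators from the factorization problem $U=W_{\bszero}^{-1}\bar{W}_{\bszero}$ (where $U$ is already essentially factorized), verify the relation at $\bst=\bar{\bst}=\bszero$, and propagate it by uniqueness of the initial value problem for the common Lax equations. Your intertwining identity $\Lambda^{1/(\tau+1)}U=-U\Lambda^{-1/(\tau^{-1}+1)}$ is exactly equivalent to the paper's explicit initial-value computation (\ref{L0Lbar0-rel}), and your worry about $q^{cK/2}$ resolves itself since $K$ acts as the multiplication operator $(s-1/2)^2$, so its conjugation action only produces $s$-dependent scalar coefficients, not higher-order terms.
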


Implications of this algebraic relation are discussed 
in the next section. The rest of this section is devoted 
to proving this theorem.  Let us note here the following 
immediate consequence. 

\begin{cor}\label{key-cor}
There is a function $u = u(s,\bst,\bar{\bst})$ such that 
\beq
  L^{1/(\tau+1)} = - \bar{L}^{-1/(\tau^{-1}+1)} 
  = (1 - u\Lambda^{-1})\Lambda^{1/(\tau+1)}
  \label{LLbar-u}
\eeq
\end{cor}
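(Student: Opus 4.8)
The plan is to read Corollary~\ref{key-cor} off directly from the normal forms of the two fractional powers appearing in the algebraic relation (\ref{LLbar-rel}); the argument is purely formal. First I would record, from the expression for $L^{\alpha}$ established above with $\alpha = 1/(\tau+1)$, that
\[
  L^{1/(\tau+1)} = \bigl(1 + p_1\Lambda^{-1} + p_2\Lambda^{-2} + \cdots\bigr)\Lambda^{1/(\tau+1)},
\]
so that its symbol, meaning the difference-operator coefficient standing to the left of the pure fractional shift $\Lambda^{1/(\tau+1)}$, is a power series in $\Lambda^{-1}$ with constant term $1$, whose coefficients $p_n$ are functions of $(s,\bst,\bar{\bst})$.

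The second step is to put the right-hand side of (\ref{LLbar-rel}) into a parallel form. The exponents obey the arithmetic identity $-1/(\tau^{-1}+1) = 1/(\tau+1) - 1$, so $\Lambda^{-1/(\tau^{-1}+1)} = \Lambda^{-1}\Lambda^{1/(\tau+1)}$. Inserting this into the expression for $\bar L^{\alpha}$ and commuting the factor $\Lambda^{-1}$ through the coefficient series gives
\[
  \bar L^{-1/(\tau^{-1}+1)} = \bigl(\bar p_1 + \bar p_0\Lambda^{-1} + \bar p_2\Lambda + \bar p_3\Lambda^2 + \cdots\bigr)\Lambda^{1/(\tau+1)},
\]
whose symbol runs over the powers $\Lambda^{k}$ with $k \geq -1$ (and $\bar p_0 \neq 0$ because $\bar w_0 \neq 0$).

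Finally I would invoke Theorem~\ref{key-thm}: the two operators coincide, hence so do their symbols, and matching the coefficient of each $\Lambda^{k}$, $k \in \ZZ$, settles the claim in one stroke. The coefficients of $\Lambda^{k}$ with $k \geq 1$ force $\bar p_{k+1} = 0$; those of $\Lambda^{-k}$ with $k \geq 2$ force $p_k = 0$; the coefficient of $\Lambda^{0}$ gives $1 = -\bar p_1$; and the coefficient of $\Lambda^{-1}$ gives $p_1 = -\bar p_0$. Thus only the powers $\Lambda^{0}$ and $\Lambda^{-1}$ survive, so $L^{1/(\tau+1)} = (1 + p_1\Lambda^{-1})\Lambda^{1/(\tau+1)}$; putting $u = -p_1$ (equivalently $u = \bar p_0$) reproduces exactly (\ref{LLbar-u}), with $u = u(s,\bst,\bar{\bst})$ since $p_1$ is. There is no genuine obstacle, the corollary being a formal consequence of (\ref{LLbar-rel}) together with the already-established shape of the fractional powers; the only point needing a little care is the routine bookkeeping of the shifts the coefficient functions pick up when $\Lambda^{\pm1}$ and $\Lambda^{1/(\tau+1)}$ are commuted past them.
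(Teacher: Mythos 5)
Your proposal is correct and is essentially the paper's own argument: both compare the normal forms $(1+p_1\Lambda^{-1}+\cdots)\Lambda^{1/(\tau+1)}$ and $-(\bar p_0+\bar p_1\Lambda+\cdots)\Lambda^{-\tau/(\tau+1)}$, observe that the exponents overlap only in the two powers $\Lambda^{1/(\tau+1)}$ and $\Lambda^{-\tau/(\tau+1)}$, and conclude that all other coefficients vanish, yielding $u=-p_1=\bar p_0$. (Minor quibble: no actual ``commuting'' of $\Lambda^{-1}$ past coefficients is needed, since it multiplies the symbol from the right; your displayed result is nonetheless correct.)
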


\begin{proof}
The both sides of (\ref{LLbar-rel}) are fractional 
difference operators of the following form: 
\[
\begin{aligned}
  L^{1/(\tau+1)} 
  &= (1 + p_1\Lambda^{-1} + p_2\Lambda^{-2} + \cdots)\Lambda^{1/(\tau+1)}\\
  &= \Lambda^{1/(\tau+1)} + p_1\Lambda^{-\tau/(\tau+1)} + \cdots,\\
  - \bar{L}^{-1/(\tau^{-1}+1)} 
  &= - (\bar{p}_0 + \bar{p}_1\Lambda + \bar{p}_2\Lambda^2+ \cdots)
     \Lambda^{-\tau/(\tau+1)}\\
  &= - \bar{p}_0\Lambda^{-\tau/(\tau+1)} - \bar{p}_1\Lambda^{1/(\tau+1)} + \cdots. 
\end{aligned}
\]
Therefore only two terms survive: 
\[
\begin{gathered}
  L^{1/(\tau+1)} = - \bar{L}^{-1/(\tau^{-1}+1)} 
  = \Lambda^{1/(\tau+1)} + p_1\Lambda^{-\tau/(\tau+1)},\\
  \bar{p}_1 = -1, \quad \bar{p_0} = - p_1. 
\end{gathered}
\]
\end{proof}

\subsection{Factorization problem}

The proof of Theorem \ref{key-thm} borrows a main idea 
from our previous work \cite{Takasaki13a,Takasaki13b,Takasaki14}. 
The idea is to use the factorization problem 
\beq
  \exp\left(\sum_{k=1}^\infty t_k\Lambda^k\right)
  U\exp\left(- \sum_{k=1}^\infty\bar{t}_k\Lambda^{-k}\right)
  = W^{-1}\bar{W} 
  \label{factor}
\eeq
at a particular point of the $(\bst,\bar{\bst})$-space 
where the factors $W,\bar{W}$ can be obtained explicitly. 

In the usual setting \cite{UT84,TT95,Takasaki84}, 
(\ref{factor}) is an equation for $\ZZ\times\ZZ$ matrices.  
Namely, given an invertible (typically, exponential) matrix $U$, 
the problem is to find a lower-triangular matrix $W$ 
and an upper-triangular matrix $\bar{W}$ satisfying 
(\ref{factor}).  Arbitrariness of $W$ and $\bar{W}$ 
(i.e., gauge freedom $W \to DW$, $\bar{W} \to D\bar{W}$ 
by a diagonal matrix $D$) disappears 
under the normalization condition 
that all diagonal elements of $W$ are equal to $1$. 
The Lax operators defined by $W$ and $\bar{W}$ as 
\[
  L = W\Lambda W^{-1}, \quad 
  \bar{L}^{-1} = \bar{W}\Lambda\bar{W}^{-1} 
\]
give a solution of the 2D Toda hierarchy.  

Moreover, this $\ZZ\times\ZZ$ matrix formalism 
is directly related to the fermionic expression 
of tau functions like (\ref{Tau-def}) \cite{Takebe91,Zabrodin13}.  
If $h$ takes the exponential form $h = e^{\hat{A}}$ 
with a fermion bilinear 
\[
  \hat{A} = \sum_{i,j\in\ZZ}a_{ij}{:}\psi_{-i}\psi^*_j{:}, 
\]
$U$ becomes the exponential matrix $e^A$ with 
\[
  A = (a_{ij})_{i,j\in\ZZ}. 
\]
If $h$ is a product of such exponential operators, 
$U$ is a product of the associated exponential matrices. 
In this sense, $K$, $L_0$ and $J_k$ correspond to 
diagonal and shift matrices as 
\footnote{To simplify the notations, we use $\Lambda^k$ 
for both the matrix and the shift operator.}
\beq
  K \leftrightarrow (\Delta - 1/2)^2,\quad 
  L_0 \leftrightarrow \Delta,\quad 
  J_k \leftrightarrow \Lambda^k, 
\eeq
where 
\[
  1 = (\delta_{ij})_{i,j\in\ZZ},\quad 
  \Delta = (i\delta_{ij})_{i,j\in\ZZ},\quad 
  \Lambda^k = (\delta_{i+k,j})_{i,j\in\ZZ}. 
\]
The single-variate vertex operators 
\[
  \Gamma_\pm(z) 
  = \exp\left(\sum_{k=1}^\infty\frac{z^k}{k}J_{\pm k}\right) 
\]
amount to the matrix 
\[
  \exp\left(\sum_{k=1}^\infty\frac{z^k}{k}\Lambda^{\pm k}\right) 
  = \exp\left(- \log(1 - z\Lambda^{\pm 1})\right)
  = (1 - z\Lambda^{\pm 1})^{-1}, 
\]
hence 
\[
  \Gamma_\pm(q^{-\rho}) = \prod_{i=1}^\infty\Gamma_\pm(q^{i-1/2}) 
  \longleftrightarrow 
  \prod_{i=1}^\infty(1 - q^{i-1/2}\Lambda^{\pm 1})^{-1}. 
\]
Thus the operator $h$ of (\ref{h-def})corresponds 
to the following matrix: 
\beq
\begin{aligned}
  U &= q^{(\tau+1)(\Delta-1/2)^2/2}\cdot
      \prod_{k=1}^\infty(1 - q^{i-1/2}\Lambda^{-1})^{-1}\\
    &\quad\mbox{}\times 
      \prod_{k=1}^\infty(1 - q^{i-1/2}\Lambda)^{-1}\cdot
      q^{(\tau^{-1}+1)(\Delta-1/2)^2/2}. 
\end{aligned}
\label{U-mat}
\eeq
      
In the present setting, $s$ is considered to be 
a continuous variable.  Therefore we interpret 
(\ref{factor}) as equations for operators 
on the continuous space $\RR$.  Accordingly, 
$K$, $L_0$ and $\Lambda^k$ now correspond to 
multiplication and difference operators as 
\beq
  K \leftrightarrow (s- 1/2)^2,\quad 
  L_0 \leftrightarrow s,\quad 
  J_k \leftrightarrow \Lambda^k = e^{k\rd_s}. 
\eeq
Thus the matrix $U$ of (\ref{U-mat}) is replaced by the operator 
\beq
\begin{aligned}
  U &= q^{(\tau+1)(s-1/2)^2/2}\cdot
      \prod_{k=1}^\infty(1 - q^{i-1/2}\Lambda^{-1})^{-1}\\
    &\quad\mbox{}\times 
      \prod_{k=1}^\infty(1 - q^{i-1/2}\Lambda)^{-1}\cdot
      q^{(\tau^{-1}+1)(s-1/2)^2/2}. 
\end{aligned}
\label{U-op}
\eeq

Note that this interpretation is consistent with 
the computation in the proof of Theorem \ref{Tau(s+c)-thm}.  
We have seen therein that the operator $K$ in $h$ 
turns into $K + cL_0 + c(c-1)J_0/2$ as $s$ is shifted to $s + c$. 
This exactly corresponds to the variation 
\[
  (s - 1/2)^2 \to (s + c - 1/2)^2 = (s - 1/2)^2 + cs + c(c-1)/2 
\]
of the associated multiplication operator.

\subsection{Initial values of operators}

We use the factorization problem (\ref{factor}) 
to compute the initial values 
of $L^{1/(\tau+1)}$ and $\bar{L}^{-1/(\tau^{-1}+1)}$ 
at $\bst = \bar{\bst} = \bszero$.  
The first step toward this end is to find 
the initial values of the dressing operators. 

When $\bst = \bar{\bst} = \bszero$, 
the factorization problem takes the simpler form 
\beq
  U = W_{\bszero}^{-1}\bar{W}_{\bszero}, 
\label{factor0}
\eeq
where 
\[
   W_{\bszero} = W|_{\bst=\bar{\bst}=\bszero},\quad 
   \bar{W}_{\bszero} = \bar{W}_{\bst=\bar{\bst}=\bszero}. 
\]
$W_{\bszero}$ and $\bar{W}_{\bszero}$ are the initial values 
of the dressing operators, and thus characterized 
by the simpler factorization problem (\ref{factor0}). 

Unlike the original factorization problem (\ref{factor}), 
this factorization problem can be solved easily, 
because the operator $U$ of (\ref{U-op}) is 
already factorized in an almost final form.  
The only thing to do is to adjust the factors 
by gauge freedom so that the leading coefficient 
of the $\Lambda^{-1}$-expansion of the first factor 
is equal to $1$.  We thus obtain the following expression 
of $W_{\bszero}$ and $\bar{W}_{\bszero}$: 
\begin{gather}
  W_{\bszero} 
  = q^{(\tau+1)(s-1/2)^2/2}\cdot\prod_{i=1}^\infty(1 - q^{i-1/2}\Lambda^{-1})
    \cdot q^{-(\tau+1)(s-1/2)^2/2},
  \label{W0}\\
  \bar{W}_{\bszero} 
  = q^{(\tau+1)(s-1/2)^2/2}\cdot\prod_{i=1}^\infty(1 - q^{i-1/2}\Lambda)^{-1}
    \cdot q^{(\tau^{-1}+1)(s-1/2)^2/2}.
  \label{Wbar0} 
\end{gather}

This enables us to compute the initial values 
\[
  L_{\bszero}^{1/(\tau+1)} 
    = L^{1/(\tau+1)}|_{\bst=\bar{\bst}=\bszero}, \quad 
  \bar{L}^{-1/(\tau^{-1}+1)}_{\bszero} 
    = \bar{L}^{-1/(\tau^{-1}+1)}|_{\bst=\bar{\bst}=\bszero}
\]
of $L^{1/(\tau+1)}$ and $\bar{L}^{-1/(\tau^{-1}+1)}$ 
from $W_{\bszero}$ and $\bar{W}_{\bszero}$ as 
\[
  L_{\bszero}^{1/(\tau+1)} 
    = W_{\bszero}\Lambda^{1/(\tau+1)}W_{\bszero}^{-1},\quad 
  \bar{L}_{\bszero}^{1/(\tau^{-1}+1)} 
    = \bar{W}_{\bszero}\Lambda^{-1/(\tau^{-1}+1)}\bar{W}_{\bszero}^{-1}. 
\]
For convenience, we introduce the auxiliary operators 
\[
  M_{\bszero} = W_{\bszero}sW_{\bszero}^{-1},\quad 
  \bar{M}_{\bszero} = \bar{W}_{\bszero}s\bar{W}_{\bszero}^{-1}, 
\]
which are the initial values of the Orlov-Schulman operators \cite{TT95}
\[
\begin{gathered}
  M = W\left(s + \sum_{k=1}^\infty kt_k\Lambda^k\right)W^{-1},\\
  \bar{M} = \bar{W}\left(s - \sum_{k=1}^\infty k\bar{t}_k\Lambda^{-k}
            \right)\bar{W}^{-1}. 
\end{gathered}
\]

\begin{lemma}
\beq
  L^{1/(\tau+1)}_{\bszero} = q^{M_{\bszero}}q^{-s}\Lambda^{1/(\tau+1)},\quad 
  q^{M_{\bszero}} = q^s(1 - q^{(\tau+1)s - \tau - 3/2}\Lambda^{-1}). 
  \label{L0M0}
\eeq
\end{lemma}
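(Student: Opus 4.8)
The plan is to compute $L^{1/(\tau+1)}_{\bszero}$ directly from its definition
$L^{1/(\tau+1)}_{\bszero} = W_{\bszero}\Lambda^{1/(\tau+1)}W_{\bszero}^{-1}$
using the explicit product formula (\ref{W0}) for $W_{\bszero}$, and simultaneously to identify the conjugation by $W_{\bszero}$ with the $q$-exponential of the Orlov--Schulman operator $M_{\bszero}$. First I would write $W_{\bszero} = q^{(\tau+1)(s-1/2)^2/2}\,P\,q^{-(\tau+1)(s-1/2)^2/2}$ with $P = \prod_{i=1}^\infty(1 - q^{i-1/2}\Lambda^{-1})$, and note that $P$ is a difference operator independent of the conjugating exponential. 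The key computational device is the intertwining relation between the Gaussian-in-$s$ factor and shifts: since $q^{(s-1/2)^2/2}\Lambda^{-1} = \Lambda^{-1}q^{(s+1/2)^2/2} = \Lambda^{-1}q^{(s-1/2)^2/2}q^{s}$, conjugation of $\Lambda^{-1}$ by $q^{(\tau+1)(s-1/2)^2/2}$ produces exactly a factor $q^{-(\tau+1)s}$ (up to a fixed $q$-power), which is what converts the infinite product into the telescoping shape needed.

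The main steps, in order, are: (i) compute $M_{\bszero} = W_{\bszero}\,s\,W_{\bszero}^{-1}$ by observing that conjugation by the Gaussian factor leaves $s$ fixed, so $M_{\bszero} = P\,s\,P^{-1} = s + P[s,P^{-1}]$; expanding $P^{-1} = \prod(1-q^{i-1/2}\Lambda^{-1})^{-1} = 1 + (\sum_i q^{i-1/2})\Lambda^{-1} + \cdots$ and using $[s,\Lambda^{-1}] = \Lambda^{-1}$, one gets $M_{\bszero} = s + (q^{1/2}/(1-q))\Lambda^{-1} + \cdots$, so that $q^{M_{\bszero}} = q^s\exp\big((\log q)(q^{1/2}/(1-q))\Lambda^{-1} + \cdots\big)$; (ii) resum this series in closed form to obtain $q^{M_{\bszero}} = q^s(1 - q^{(\tau+1)s-\tau-3/2}\Lambda^{-1})$ — equivalently, verify that the single-term difference operator on the right conjugates $s$ correctly and has the right leading behaviour, using the identity $q^A = (1-a\Lambda^{-1})$ where $A$ solves a first-order difference equation; (iii) compute $W_{\bszero}\Lambda^{1/(\tau+1)}W_{\bszero}^{-1}$: the Gaussian factors contribute $q^{(\tau+1)((s-1/2+1/(\tau+1))^2-(s-1/2)^2)/2}$, which simplifies to $q^{s}\times(\text{constant})$, while $P\Lambda^{1/(\tau+1)}P^{-1}$ contributes the remaining difference-operator structure; (iv) assemble the pieces and match against the claimed form $q^{M_{\bszero}}q^{-s}\Lambda^{1/(\tau+1)}$, checking that the $q$-power $(\tau+1)s - \tau - 3/2$ in the exponent is exactly what the conjugation calculus delivers.

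The hard part will be step (ii)/(iii): carefully tracking the constant ($s$-independent) $q$-powers and the half-integer shifts $i-1/2$ so that the infinite product $\prod_{i=1}^\infty(1-q^{i-1/2}\Lambda^{-1})$ collapses, under conjugation, to the single factor $(1 - q^{(\tau+1)s-\tau-3/2}\Lambda^{-1})$. The mechanism is that conjugating $\Lambda^{-1}$ past $q^{(\tau+1)(s-1/2)^2/2}$ multiplies it by $q^{-(\tau+1)(s-1)}$ (from $(s-1/2)^2 - (s-3/2)^2 = 2(s-1)$, halved and scaled), turning the product of $(1-q^{i-1/2}\Lambda^{-1})$-type factors into a geometric-type telescoping whose net effect on the leading term is a single shifted exponential; the bookkeeping of which $s$ sits in which factor is where errors creep in. Everything else is the standard dressing-operator calculus of the 2D Toda hierarchy, and once $q^{M_{\bszero}}$ is in closed form the identity $L^{1/(\tau+1)}_{\bszero} = q^{M_{\bszero}}q^{-s}\Lambda^{1/(\tau+1)}$ follows by direct substitution. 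I expect the analogous computation for $\bar{L}^{-1/(\tau^{-1}+1)}_{\bszero}$ (needed next to prove Theorem \ref{key-thm}) to proceed by the same route, and the algebraic relation (\ref{LLbar-rel}) to drop out by comparing the two closed forms and invoking the invariance of the relation under the Toda flows.
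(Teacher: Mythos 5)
Your overall strategy is the same as the paper's (conjugate $\Lambda^{1/(\tau+1)}$ and $s$ through the factorized form (\ref{W0}) of $W_{\bszero}$, identify $q^{M_{\bszero}} = W_{\bszero}q^sW_{\bszero}^{-1}$, and collapse the infinite product by a telescoping argument), but step (i) contains a genuine error that propagates. Writing $W_{\bszero} = q^{G}Pq^{-G}$ with $G = (\tau+1)(s-1/2)^2/2$ and $P = \prod_{i\ge 1}(1-q^{i-1/2}\Lambda^{-1})$, the correct statement is $M_{\bszero} = q^{G}\bigl(PsP^{-1}\bigr)q^{-G}$, \emph{not} $M_{\bszero} = PsP^{-1}$. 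The inner conjugation $q^{-G}sq^{G} = s$ is indeed trivial, but $PsP^{-1}$ is no longer a pure multiplication operator --- it carries $\Lambda^{-1}$-terms --- and the \emph{outer} conjugation by $q^{G}$ acts nontrivially on those, multiplying the $\Lambda^{-1}$ coefficient by $q^{(\tau+1)(s-1)}$. This is precisely where the $\tau$- and $s$-dependence of the exponent $(\tau+1)s-\tau-3/2$ comes from. As written, your series $M_{\bszero} = s + (q^{1/2}/(1-q))\Lambda^{-1} + \cdots$ contains no $\tau$ at all, so it cannot possibly resum to $q^s(1 - q^{(\tau+1)s-\tau-3/2}\Lambda^{-1})$ in step (ii); what it actually gives is only the intermediate quantity $Pq^sP^{-1} = q^s(1-q^{-1/2}\Lambda^{-1})$.

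Two further points. First, the power-series-and-resum route for $q^{M_{\bszero}}$ is both harder and less reliable than necessary: the paper gets the closed form in one stroke from the commutation identity $f(\Lambda)q^s = q^s f(q\Lambda)$, which yields
\[
  Pq^sP^{-1}
  = q^s\prod_{i=1}^\infty(1 - q^{i-3/2}\Lambda^{-1})\cdot
    \prod_{i=1}^\infty(1 - q^{i-1/2}\Lambda^{-1})^{-1}
  = q^s(1 - q^{-1/2}\Lambda^{-1})
\]
exactly (the two products differ by the single $i=1$ factor $1-q^{-1/2}\Lambda^{-1}$), after which the Gaussian conjugation $q^{G}(\cdot)q^{-G}$ supplies the factor $q^{(\tau+1)(s-1)}$ and hence the stated exponent. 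You should replace your expansion of $P^{-1}$ by this identity. Second, watch the direction of the Gaussian conjugation: you quote $q^{-(\tau+1)(s-1)}$, which is the effect of $q^{-G}\Lambda^{-1}q^{G}$, whereas the computation of $q^{M_{\bszero}}$ requires $q^{G}\Lambda^{-1}q^{-G} = q^{+(\tau+1)(s-1)}\Lambda^{-1}$. Your steps (iii)--(iv) for assembling $L^{1/(\tau+1)}_{\bszero} = q^{M_{\bszero}}q^{-s}\Lambda^{1/(\tau+1)}$ are essentially the paper's decomposition (using that $P$ commutes with $\Lambda^{1/(\tau+1)}$ and that $q^{-G}$ commutes with $q^s$, so $Vq^sV^{-1} = q^{M_{\bszero}}$ for $V = q^{G}P$), and would go through once $q^{M_{\bszero}}$ is computed correctly.
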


\begin{proof}
Let us split the expression (\ref{W0}) of $W_{\bszero}$ 
into two parts: 
\[
  W_{\bszero} = Vq^{-(\tau+1)(s-1/2)^2/2},\quad 
  V = q^{(\tau+1)(s-1/2)^2/2}\prod_{i=1}^\infty(1 - q^{i-1/2}\Lambda^{-1}),  
\]
and compute $L_{\bszero}^{1/(\tau+1)} 
= W_{\bszero}\Lambda^{1/(\tau+1)}W_{\bszero}^{-1}$ step by step. 
The first step is to use the general formula 
\[
  \Lambda^{1/(\tau+1)}f(s) = f(s + 1/(\tau+1))\Lambda^{1/(\tau+1)}
\]
to rewrite the product of middle three terms as 
\[
\begin{aligned}
  & q^{-(\tau+1)(s-1/2)^2/2}\Lambda^{1/(\tau+1)}q^{(\tau+1)(s-1/2)^2/2}\\
  &= q^{-(\tau+1)(s-1/2)^2/2}q^{(\tau+1)(s+1/(\tau+1)-1/2)^2/}\Lambda^{1/(\tau+1)}\\
  &= q^{-\tau/(2(\tau+1))}q^s\Lambda^{1/(\tau+1)}. 
\end{aligned}
\]
In the next step, we handle the last two operators 
$q^s$ and $\Lambda^{1/(\tau+1)}$ separately: 
\[
  W_{\bszero}\Lambda^{1/(\tau+1)}W_{\bszero}^{-1} 
  = q^{-\tau/(2(\tau+1))}\cdot Vq^sV^{-1}\cdot V\Lambda^{1/(\tau+1)}V^{-1}. 
\]
Since 
\[
  Vq^sV^{-1} = W_{\bszero}q^sW_{\bszero} = q^{M_{\bszero}} 
\]
and 
\[
\begin{aligned}
  V\Lambda^{1/(\tau+1)}V^{-1}
  &= q^{(\tau+1)(s-1/2)^2/2}\Lambda^{1/(\tau+1)}q^{-(\tau+1)(s-1/2)^2/2}\\
  &= q^{(\tau+1)(s-1/2)^2/2}q^{-(\tau+1)(s+1/(\tau+1)-1/2)^2/2}\Lambda^{1/(\tau+1)}\\
  &= q^{\tau/(2(\tau+1))}q^{-s}\Lambda^{1/(\tau+1)}, 
\end{aligned}
\]
we obtain the expression of $L_{\bszero}^{1/(\tau+1)}$ in (\ref{L0M0}).  
Let us proceed to computation of $q^{M_{\bszero}}$. 
We can now use the general formula 
\[
  f(\Lambda)q^s = q^sf(q\Lambda)
\]
as 
\[
\begin{aligned}
  & \prod_{i=1}^\infty(1 - q^{i-1/2}\Lambda^{-1})\cdot q^s\cdot
    \prod_{i=1}^\infty(1 - q^{i-1/2}\Lambda^{-1})^{-1} \\
  &= q^s\prod_{i=1}^\infty(1 - q^{i-3/2}\Lambda^{-1})\cdot
    \prod_{i=1}^\infty(1 - q^{i-1/2}\Lambda^{-1})^{-1}\\
  &= q^s(1 - q^{-1/2}\Lambda^{-1}). 
\end{aligned}
\]
Consequently, 
\[
\begin{aligned}
  q^{M_{\bszero}} 
  &= q^{(\tau+1)(s-1/2)^2/2}q^s(1 - q^{-1/2}\Lambda^{-1})q^{-(\tau+1)(s-1/2)^2/2}\\
  &= q^s - q^{s-1/2}q^{(\tau+1)(s-1/2)^2/2}\Lambda^{-1}q^{-(\tau+1)(s-1/2)^2/2}\\
  &= q^s - q^{s-1/2}q^{(\tau+1)(s-1/2)^2/2}q^{-(\tau+1)(s-3/2)^2/2}\Lambda^{-1}\\
  &= q^s(1 - q^{(\tau+1)s-\tau-3/2}\Lambda^{-1}). 
\end{aligned}
\]
\end{proof}

\begin{lemma}
\beq
  \bar{L}_{\bszero}^{-1/(\tau^{-1}+1)} 
  = q^{\bar{M}_{\bszero}}q^{\tau s - \tau - 1/2}\Lambda^{-1/(\tau^{-1}+1)},\quad 
  q^{\bar{M}_{\bszero}} = q^s(1 - q^{-(\tau+1)s+1/2}\Lambda). 
  \label{Lbar0Mbar0}
\eeq
\end{lemma}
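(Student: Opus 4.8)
The plan is to run the argument of the preceding lemma in its barred form, now starting from the expression (\ref{Wbar0}) of $\bar{W}_{\bszero}$ and dressing the fractional shift $\Lambda^{-1/(\tau^{-1}+1)}$. Throughout I abbreviate $\beta = 1/(\tau^{-1}+1) = \tau/(\tau+1)$ (so that $1 - \beta = 1/(\tau+1)$) and use freely the two elementary identities already exploited above: $\Lambda^{\alpha}f(s) = f(s+\alpha)\Lambda^{\alpha}$ and $f(\Lambda)q^{s} = q^{s}f(q\Lambda)$.

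First I would split
\[
  \bar{W}_{\bszero} = \bar{V}\,q^{(\tau^{-1}+1)(s-1/2)^2/2}, \qquad
  \bar{V} = q^{(\tau+1)(s-1/2)^2/2}\prod_{i=1}^\infty(1-q^{i-1/2}\Lambda)^{-1},
\]
so that, conjugating $\Lambda^{-\beta}$ through the rightmost Gaussian factor and using $(\tau^{-1}+1)\beta = 1$ to evaluate the resulting scalar,
\[
  \bar{L}^{-\beta}_{\bszero} = \bar{W}_{\bszero}\Lambda^{-\beta}\bar{W}_{\bszero}^{-1}
  = q^{-(\beta+1)/2}\,\bigl(\bar{V}q^{s}\bar{V}^{-1}\bigr)\bigl(\bar{V}\Lambda^{-\beta}\bar{V}^{-1}\bigr).
\]
Here $\bar{V}q^{s}\bar{V}^{-1} = \bar{W}_{\bszero}q^{s}\bar{W}_{\bszero}^{-1} = q^{\bar{M}_{\bszero}}$ because the Gaussian factor $q^{(\tau^{-1}+1)(s-1/2)^2/2}$ commutes with $q^{s}$; the crucial simplification is that $\Lambda^{-\beta}$ commutes with every factor $1 - q^{i-1/2}\Lambda$ (a constant times a shift), hence $\bar{V}\Lambda^{-\beta}\bar{V}^{-1} = q^{(\tau+1)(s-1/2)^2/2}\Lambda^{-\beta}q^{-(\tau+1)(s-1/2)^2/2} = q^{\tau s - \tau(\beta+1)/2}\Lambda^{-\beta}$, using $(\tau+1)\beta = \tau$. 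Pulling the scalar $q^{-(\beta+1)/2}$ through $q^{\bar{M}_{\bszero}}$ and combining exponents via $(\tau+1)(\beta+1)/2 = \tau + 1/2$ yields the first asserted identity $\bar{L}^{-1/(\tau^{-1}+1)}_{\bszero} = q^{\bar{M}_{\bszero}}q^{\tau s - \tau - 1/2}\Lambda^{-1/(\tau^{-1}+1)}$.

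For the explicit form of $q^{\bar{M}_{\bszero}} = \bar{V}q^{s}\bar{V}^{-1}$, I would apply $f(\Lambda)q^{s} = q^{s}f(q\Lambda)$ to the product of shift operators, turning $\prod_i(1-q^{i-1/2}\Lambda)^{-1}\,q^{s}\,\prod_i(1-q^{i-1/2}\Lambda)$ into $q^{s}\prod_i(1-q^{i+1/2}\Lambda)^{-1}\prod_i(1-q^{i-1/2}\Lambda)$, which telescopes to $q^{s}(1-q^{1/2}\Lambda)$. Conjugating the surviving single factor $\Lambda$ by the Gaussian then gives $q^{(\tau+1)(s-1/2)^2/2}\Lambda q^{-(\tau+1)(s-1/2)^2/2} = q^{-(\tau+1)s}\Lambda$, whence $q^{\bar{M}_{\bszero}} = q^{s}\bigl(1 - q^{-(\tau+1)s+1/2}\Lambda\bigr)$, as claimed.

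I do not expect a genuine obstacle: the computation is the mirror image of the preceding lemma and consists entirely of conjugating shift operators by Gaussian exponentials $q^{c(s-1/2)^2/2}$ and of telescoping infinite products. The only point requiring care is distinguishing the genuine operator factors (the $q^{\bar{M}_{\bszero}}$-type terms, which need not commute with multiplication operators) from the scalar prefactors such as $q^{-(\beta+1)/2}$, so that the latter may be moved freely and fused with $q^{\tau s - \tau(\beta+1)/2}$ into the single multiplication operator $q^{\tau s - \tau - 1/2}$.
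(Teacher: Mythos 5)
Your computation is correct and is exactly the argument the paper has in mind: its own "proof" of this lemma merely says the expressions follow in the same way as the preceding lemma, and your derivation faithfully carries out that mirrored computation (splitting off $\bar{V}$, conjugating $\Lambda^{-\beta}$ and $q^{s}$ through the Gaussian factors, and telescoping the infinite product to $1-q^{1/2}\Lambda$), with all exponent bookkeeping checking out against \eqref{L0Lbar0-rel}. Nothing to add.
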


\begin{proof}
These expressions can be derived in much the same way 
as the proof the previous lemma.  We omit the detail. 
\end{proof}

\subsection{End of proof of Theorem \ref{key-thm}}

We can see from (\ref{L0M0}) and (\ref{Lbar0Mbar0}), 
by straightforward computation, that 
\beq
  L_{\bszero}^{1/(\tau+1)} 
  = (1 - q^{(\tau+1)s-\tau-1/2}\Lambda^{-1})\Lambda^{1/(\tau+1)} 
  = - \bar{L}_{\bszero}^{-1/(\tau^{-1}+1)}. 
  \label{L0Lbar0-rel}
\eeq
This means that the algebraic relation (\ref{LLbar-rel}) 
is satisfied at the initial time $\bst = \bar{\bst} = \bszero$. 

This is enough to conclude that (\ref{LLbar-rel}) itself 
is satisfied at all time.  
Note that the both sides of (\ref{LLbar-rel}) satisfy  
the same Lax equations of the form 
\beq
  \frac{\rd\calL}{\rd t_k} = [B_k,\calL],\quad 
  \frac{\rd\calL}{\rd\bar{t_k}} = [\bar{B}_k,\calL],\quad 
  k = 1,2,\ldots.
  \label{calL-Laxeq}
\eeq
Therefore, by the uniqueness of solution 
in the initial value problem of these equations, 
the both sides of (\ref{LLbar-rel}) 
with the same initial value should be equal 
to each other throughout the time evolutions. 
This completes the proof of Theorem \ref{key-thm}. 

\begin{remark}
The Lax and Orlov-Schulman operators turn out 
to satisfy the algebraic relation 
\beq
  \left(q^{-M}L^{1/(\tau+1)}\right)^{-\tau} 
  = q^{(\tau+1)/2}q^{-\bar{M}}\bar{L}^{-1/(\tau^{-1}+1)} 
  \label{LM-rel}
\eeq
that supplements (\ref{LLbar-rel}).  These two relations 
form a pair of conditions that single out a solution 
of the 2D Toda hierarchy \cite{TT95}. 
(\ref{LM-rel}) can be derived by the same logic 
as the foregoing derivation of (\ref{LLbar-rel}) as follows.  
Let us rewrite (\ref{L0M0}) and (\ref{Lbar0Mbar0}) as 
\[
  q^{-M_{\bszero}}L^{1/(\tau+1)}_{\bszero} = q^{-s}\Lambda^{1/(\tau+1)},\quad
  q^{-\bar{M}_{\bszero}}\bar{L}_{\bszero}^{-1/(\tau^{-1}+1)} 
  = q^{\tau s - \tau - 1/2}\Lambda^{-1/(\tau^{-1}+1)}.
\]
Since the right hand side of these equations satisfy 
the operator identity 
\[
  \left(q^{-s}\Lambda^{1/(\tau+1)}\right)^{-\tau} 
  = q^{(\tau+1)/2}q^{\tau s - \tau - 1/2}\Lambda^{-1/(\tau^{-1}+1)},  
\]
we obtain the algebraic relation 
\[
  \left(q^{-M_{\bszero}}L_{\bszero}^{1/(\tau+1)}\right)^{-\tau} 
  = q^{(\tau+1)/2}q^{-\bar{M}_{\bszero}}\bar{L}_{\bszero}^{-1/(\tau^{-1}+1)}.  
\]
This means that (\ref{LM-rel}) is satisfied 
at $\bst = \bar{\bst} = \bszero$.  Since the both sides 
of (\ref{LM-rel}) satisfy the same Lax equations, 
(\ref{LM-rel}) holds at all time.  
\end{remark}

\begin{remark}
(\ref{LM-rel}) can be reduced to the linear relation 
\beq
  \log L - (\tau + 1)(M - 1/2)\log q 
  = \log\bar{L} + (\tau^{-1} + 1)(\bar{M} - 1/2)\log q 
  \label{logLM-rel}
\eeq
among the four operators $\log L,\log\bar{L},M,\bar{M}$. 
Let us note that these operator satisfy 
the canonical commutation relations 
\[
  [\log L,M] = [\log\bar{L},\bar{M}] = 1. 
\]
We can thereby use the Baker-Campbell-Hausdorff formula 
to rewrite $q^{-M}L^{1/(\tau+1)}$ and 
$q^{-\bar{M}}\bar{L}^{-1/(\tau^{-1}+1)}$ as 
\[
\begin{aligned}
  q^{-M}L^{1/(\tau+1)}
  &= \exp\left(- M\log q + \frac{\log L}{\tau+1} 
       + \frac{\log q}{2(\tau+1)} \right),\\
  q^{-\bar{M}}\bar{L}^{-1/(\tau^{-1}+1)} 
  &= \exp\left(- \bar{M}\log q - \frac{\log\bar{L}}{\tau^{-1}+1} 
       - \frac{\log q}{2(\tau^{-1}+1)} \right). 
\end{aligned}
\]
Plugging these expressions into (\ref{LM-rel}) yields 
the exponentiated form 
\[
\begin{aligned}
  &\exp\left(\log L - (\tau + 1)(M - 1/2)\log q\right)\\
  &= \exp\left(\log\bar{L} + (\tau^{-1} + 1)(\bar{M} - 1/2)\log q\right)
\end{aligned}
\]
of (\ref{logLM-rel}). Thus (\ref{logLM-rel}) implies (\ref{LM-rel}). 
Actually, we can derive (\ref{logLM-rel}) directly 
by the same method as the derivation of (\ref{LM-rel}). 
The fact that (\ref{logLM-rel}) is satisfied 
at the initial time $\bst = \bar{\bst} = \bszero$ 
is a consequence of the relations 
\begin{gather}
  \log L_{\bszero} =  (\tau + 1)(M_{\bszero} - 1/2)\log q 
    + \log\Lambda - (\tau + 1)(s - 1/2)\log q,
  \label{logL0}\\
  \log\bar{L}_{\bszero} = - (\tau^{-1} + 1)(\bar{M}_{\bszero} - 1/2) 
    + \log\Lambda - (\tau + 1)(s - 1/2)\log q. 
  \label{logLbar0}
\end{gather}
These relations are obtained by computing 
$W_{\bszero}\Lambda^\epsilon W_{\bszero}^{-1}$ 
and $\bar{W}_{\bszero}\Lambda^\epsilon\bar{W}_{\bszero}$ 
explicitly with the aid of (\ref{W0}) and (\ref{Wbar0}) 
and taking the derivative at $\epsilon = 0$. 
\end{remark}

\section{Integrable structures in cubic Hodge integrals}

\subsection{Reduced system of Lax equations}

Let $\calL$ denote the operator (\ref{LLbar-u}) 
that emerges as a consequence of 
the key algebraic relation (\ref{LLbar-rel}). 
This operator satisfies the Lax equations (\ref{calL-Laxeq}). 
Actually, each equation can be rewritten 
in two different forms as 
\[
\begin{gathered}
  \frac{\rd\calL}{\rd t_k} = [(L^k)_{\ge 0},\calL] 
    = - [(L^k)_{<0},\calL],\\
  \frac{\rd\calL}{\rd\bar{t}_k} = [(\bar{L}^{-k})_{<0},\calL] 
    = - [(\bar{L}^{-k})_{\ge 0},\calL]. 
\end{gathered}
\]
This implies that only a $\Lambda^{-\tau/(\tau+1)}$-term 
survives on the right hand side.  Thus the Lax equations 
can be reduced to equations of the form 
\beq
  \frac{\rd u}{\rd t_k} = F_k,\quad 
  \frac{\rd u}{\rd\bar{t}_k} = \bar{F}_k, 
  \label{u-evoleq}
\eeq
where 
\[
  F_k = u\left((L^k)_0 - (L^k)_0|_{s\to s-\tau/(\tau+1)}\right)
  = (L^k)_{-1} - (L^k)_{-1}|_{s\to s+1/(\tau+1)}
\]
and
\[
  \bar{F}_k   = (\bar{L}^{-k})_{-1}|_{s\to s+1/(\tau+1)} - (\bar{L}^{-k})_{-1}
  = u\left((\bar{L}^{-k})_0|_{s\to s-\tau/(\tau+1)} - (\bar{L}^{-k})_0\right). 
\]
$(A)_n$ denotes the coefficient of $\Lambda^n$ 
in the difference operator $A$. 

If $B_k$'s and $\bar{B}_k$'s have {\it local} expressions 
with respect to $u$, namely, depend on a finite number 
of the shifted $u$'s $u(s),u(s\pm 1),u(s\pm 2),\ldots$, 
so do $F_k$'s and $\bar{F}_k$'s.  In such a case, 
(\ref{u-evoleq}) is a system of evolution equations for $u$ 
in a genuine sense. (\ref{calL-Laxeq}) becomes 
a Lax representation thereof.  
The problem of {\it nonlocality} arises, e.g., 
when we attempt to construct fractional powers 
of a difference operator directly, namely, without recourse 
to the use of a dressing operator  (cf. Carlet's construction 
of the bigraded Toda hierarchy \cite{Carlet06}).  

We shall not pursue the problem of locality, 
and develop our consideration rather formally. 
As we show below, various integrable hierarchies emerge 
when $\tau$ takes rational values of particular forms.

\subsection{When $\tau$ is a positive integer}

Let us consider the case where $\tau$ is equal 
to a positive integer $N$.  In this case, 
$\calL$ is a fractional difference operator of the form 
\beq
  \calL = \Lambda^{1/(N+1)} - u\Lambda^{-N/(N+1)}. 
  \label{calL(N)}
\eeq
This is exactly the Lax operator 
of the Bogoyavlensky-Itoh-Narita system 
\cite{Narita82,Bog87,Itoh87} realized 
on the fractional lattice $(N+1)^{-1}\ZZ \subset \ZZ$. 
The case of $N = 1$ amounts to the usual Volterra lattice. 
Thus, as conjectured and partially proved 
by Dubrovin et al. \cite{DY1606,DLYZ1612,LZZ17}, 
integrable hierarchies of the Volterra type underlie 
the cubic Hodge integrals when $\tau$ is a positive integer.  

The $\bst$-flows generated by 
\[
  B_k = (\calL^{(N+1)k})_{\ge 0},\quad k = 1,2,\ldots, 
\]
can be identified with the genuine time evolutions 
of the Bogoyavlensky-Itoh-Narita system. 
Moreover, the $(N+1)$-st power of $\calL$ 
is a difference operator of the form 
\beq
  \calL^{N+1} 
  = L = (-1)^{N+1}\bar{L}^{-N} 
  = \Lambda + p_1 + \cdots + p_{N+1}\Lambda^{-N}, 
\eeq
and can be identified with the Lax operator 
of the bigraded Toda hierarchy of the type $(1,N)$. 
Since 
\[
   (\calL^{(N+1)k})_{<0} = (-1)^{(N+1)k}\bar{B}_{Nk}, 
\]
the $\bst$-flows coincide with part of the $\bar{\bst}$-flows 
up to sign factors.  The other negative flows $\bar{t}_k$, 
$k \not\equiv 0 \mod N$, are nonlocal. 

The problem of nonlocality can be avoided 
if we leave the 2D Toda hierarchy and treat 
the Bogoyavlensky-Itoh-Narita system 
as a reduction of the lattice KP hierarchy 
(aka the discrete KP hierarchy \cite{Dickey-book}) 
\beq
  \frac{\rd L}{\rd t_k} = [B_k,L], \quad k = 1,2,\ldots. 
  \label{latticeKP}
\eeq
The lattice KP hierarchy is simply a subset 
of the 2D Toda hierarchy that consists of 
the same Lax operator $L$ and the Lax equations 
with respect to to $\bst$.  $B_k$'s are defined 
by $L$ as $B_k = (L^n)_{\ge 0}$, hence local. 
The tau function $\calT(s,\bst,\bszero)$ restricted 
to $\bar{\bst} = \bszero$, which is a generating function 
of the one-partition Hodge integrals, becomes a tau function 
of the lattice KP hierarchy. 

In a limit as $N \to \infty$, the Bogoyavlensky-Itoh-Narita 
hierarchy turns into a continuous version \cite{Bog88,Itoh88}. 
The reduced Lax operator (\ref{calL(N)}) is replaced therein 
by a difference-differential operator of the form 
\beq
  \calL = \log\Lambda - u\Lambda^{-1} = \rd_s - ue^{-\rd_s}.
  \label{calL(infty)}
\eeq
In the same limit, the cubic Hodge integrals become 
the {\it linear} Hodge integrals that are related 
to the Hurwitz numbers of $\CC\PP^1$ (the ELSV formula) 
\cite{ELSV00}.  We can thus reconfirm our recent result 
\cite{Takasaki18b} that the continuous Bogoyavlensky-Itoh hierarchy 
underlies the Hurwitz numbers.  

Let us mention that almost the same difference-differential 
operator as (\ref{calL(infty)}) is used in Buryak and Rossi's 
new Lax representation of the intermediate long wave hierarchy 
\cite{BR18}. This fact is extremely significant, 
because the intermediate long wave hierarchy 
was proposed by Buryak as an integrable structure 
of the linear Hodge integrals \cite{Buryak13,Buryak15}. 
Buryak's approach is based on the Dubrovin-Zhang theory 
of integrable Hamiltonian PDEs.  Buryak and Ross's 
Lax representation is derived along the same line.

\subsection{When $\tau$ is a positive rational number}

Let us turn to the more general case where $\tau$ 
is a positive rational number, i.e., $\tau = b/a$ 
where $a$ and $b$ are positive coprime integers. 
In this case, $\calL$ becomes the following generalization 
of (\ref{calL(N)}): 
\beq
  \calL = \Lambda^{a/(a+b)}  - u\Lambda^{-b/(a+b)}. 
\eeq
This is a fractional difference operator defined 
on the lattice $(a+b)^{-1}\ZZ \subset \RR$. 
Its $(a+b)$-th power is a difference operator of the form 
\beq
  \calL^{a+b} = L^a = (-1)^{a+b}\bar{L}^{-b} 
  = \Lambda^a + p_1\Lambda^{a-1} + \cdots + p_{a+b}\Lambda^{-b}, 
\eeq
and can be identified with the Lax operator 
of the bigraded Toda hierarchy of the type $(a,b)$. 

Thus, just like the relation between the Volterra lattice 
and the Toda lattice \cite{KvM75}, the generalized 
Bogoyavlensky-Itoh-Narita system sits over 
the bigraded Toda hierarchy.  The flows of these systems 
are generated by the common generators $B_k,\bar{B}_k$, 
$k = 1,2,\ldots$ (apart from the problem of locality). 

Let us mention that these reduced systems exhibit 
the duality under the exchange 
\[
  a \leftrightarrow b,\quad 
  \bst \leftrightarrow \bar{\bst},\quad
  L\leftrightarrow\bar{L}.
\]
This duality stems from the symmetry of 
the Hodge integrals $G_{g\mu\bar{\mu}}(\tau)$ 
under the exchange 
\[ 
  \tau \leftrightarrow \tau^{-1},\quad 
  \mu \leftrightarrow \bar{\mu}.
\]

\subsection{When $\tau$ is a negative rational number}

The situation changes qualitatively 
when $\tau$ is a negative rational number $-b/a$. 
(Just like the previous case, $a$ and $b$ are 
assumed to be positive coprime integers.) 
This case is divided to two cases, namely, 
$a > b$ and $a < b$.  Since these cases can be 
interchanged by the aforementioned duality, 
let us focus our consideration on the first case. 

If $\tau = - b/a$ and $a > b$, $\calL$ comprises 
only positive powers of $\Lambda$: 
\beq
  \calL = \Lambda^{a/(a-b)} - u\Lambda^{b/(a-b)}. 
\eeq
Its $(a-b)$-th power, too, contains only positive powers: 
\beq
  \calL^{a-b} = L^a = (-1)^{a-b}\bar{L}^b 
  = \Lambda^a + p_1\Lambda^{a-1} + \cdots + p_{a-b}\Lambda^b.
  \label{calL(-b/a)}
\eeq
Consequently, every $a$-th flow in the $\bst$-space 
are stationary: 
\beq
  \frac{\rd\calL}{\rd t_{ka}} 
  = [B_{ka},\calL] 
  = [\calL^{k(a-b)},\calL] = 0, \quad 
  k = 1,2,\ldots. 
\eeq
This is reminiscent of a periodic reduction 
of the 2D Toda hierarchy \cite{UT84}, but there is no periodicity 
in the present setting (as far as $u \not= 0$). 
The dressing operator $W$ is $p$-periodic, 
i.e., $[W,\Lambda^p] = 0$, if and only if $L^p = \Lambda^p$.  
The same equivalence holds for $\bar{W}$ and $\bar{L}$. 

In a sense, this case may be thought of 
as the bigraded Toda hierarchy of ``the type $(a,-b)$''. 
It is, however, also possible to forget $\bar{L}$ and $\bar{\bst}$ 
and to consider the reduced system within the lattice KP hierarchy 
(\ref{latticeKP}).  

(\ref{calL(-b/a)}) shows that the reduced system 
is a lattice version of the Gelfand-Dickey hierarchy 
\cite{Frenkel95}
\footnote{Frenkel's formulation \cite{Frenkel95} 
uses the $q$-shift operator $\Lambda = q^{x\rd_x}$ 
rather than the shift operator $\Lambda = e^{\rd_s}$, 
but this is not an essential difference.  
It should be stressed that the discrete KdV hierarchy 
in the sense of Dubrovin et al. \cite{DLYZ1409,DY1606,DLYZ1612} 
is distinct from the lattice KdV hierarchy in the present context.  
The discrete KdV hierarchy considered therein is an alias 
of the Volterra hierarchy.  Its Lax operator comprises 
both positive and negative powers of $\Lambda$.}. 
Speaking more precisely, this is slightly different 
from the usual lattice Gelfand-Dickey hierarchy 
in the sense that the $i$-th powers of $\Lambda$ 
for $i < b$ are missing in (\ref{calL(-b/a)}). 
The usual $a$-th Gelfand-Dickey reduction 
of the lattice KP hierarchy is characterized 
by the condition that the $a$-th power of $L$ 
contains no negative powers of $\Lambda$: 
\[
  L^a = \Lambda^a + p_1\Lambda^{a-1} + \cdots + p_a. 
\]
Truncating the $\Lambda^i$-terms for $i < b$ 
is consistent with the Lax equations 
of the lattice KP hierarchy, hence 
yields a further reduction of the system.  
Detailed properties of these reduced systems 
of the lattice KP hierarchy remain to be studied. 
We shall return to this issue elsewhere. 

Lastly, let us examine the discrete series 
\[
  \tau = - N/(N+1),\quad a = N+1, \quad b = N,\quad 
   N = 1,2,\ldots. 
\]
The one- and two-partition Hodge integrals of this type 
are hidden in our recent work on the three-partition 
Hodge integrals \cite{NT19}.  This explains why 
we encountered the Gelfand-Dickey hierarchy 
in the usual sense (namely, a reduction 
of the usual KP hierarchy) therein. 
It is well known \cite{Dickey-book} 
that the Gelfand-Dickey reduction of the lattice KP hierarchy 
in the $s$-space is accompanied by 
the usual Gelfand-Dickey hierarchy in the $t_1$-space.  

For these values of $\tau$, $\calL$ comprises 
two positive integral powers of $\Lambda$: 
\beq
  \calL = \Lambda^{N+1} - u\Lambda^N. 
\eeq
Since $\calL = L^{N+1} = - \bar{L}^N$, 
this case is the closest to a periodic reduction 
($L^{N+1} = \bar{L}^{N+1} = \Lambda^{N+1}$) 
of the 2D Toda hierarchy.  In many aspects, 
this case is situated at the opposite end of the case 
where $\tau$ is a positive integer. 
Unlike the Bogoyavlensky-Itoh-Narita system, 
the main degrees of freedom is contained in $L$ and $\bar{L}$, 
and $u$ is rather an auxiliary field.

\subsection*{Acknowledgements}

This work is a byproduct of collaboration with 
Toshio Nakatsu on the topological vertex 
and the cubic Hodge integrals.   
This work is partially supported by the JSPS Kakenhi Grant 
JP18K03350.

\end{document}